\documentclass{llncs}
\usepackage[utf8]{inputenc}

\usepackage{amsmath}
\usepackage{amssymb}
\usepackage{amsfonts}
\usepackage{amsthm}
\usepackage{graphicx}
\usepackage{epstopdf}
\usepackage{enumerate}
\usepackage[ruled,vlined,linesnumbered]{algorithm2e}
\usepackage[colorinlistoftodos]{todonotes}
\usepackage{soul}

\usepackage[a4paper,top=3cm,bottom=3cm,left=3cm,right=3cm,marginparwidth=4cm]{geometry}

\newcommand{\Fail}{\textsc{Fail}}
\newcommand{\None}{\textsc{None}}

\definecolor{lightgreen}{RGB}{180,255,150}
\definecolor{lightblue}{RGB}{180,180,255}

\title{Combining Networks using Cherry Picking Sequences\thanks{Research funded by the Netherlands Organization for Scientific Research (NWO), with the Vidi grant 639.072.602.}}
\author{Remie Janssen\orcidID{0000-0002-5192-1470}\inst{1} \and Mark Jones\orcidID{0000-0002-4091-7089}\inst{2} \and Yukihiro Murakami\orcidID{0000-0003-1355-5884}\inst{1}}
\authorrunning{R. Janssen et al.}
%
\institute{Delft Institute of Applied Mathematics, Delft University of Technology, Van Mourik Broekmanweg 6, 2628 XE Delft, The Netherlands
\email{\{R.Janssen-2, Y.Murakami\}@tudelft.nl} \and Centrum Wiskunde \& Informatica,  P.O.Box 94079, 1090 GB Amsterdam, The Netherlands \email{markelliotlloyd@gmail.com}}

\begin{document}
\maketitle
\begin{abstract}
Phylogenetic networks are important for the study of evolution. The number of methods to find such networks is increasing, but most such methods can only reconstruct small networks. To find bigger networks, one can attempt to combine small networks. In this paper, we study the {\sc Network Hybridization} problem, a problem of combining networks into another network with low complexity. We characterize this complexity via a restricted problem, {\sc Tree-child Network Hybridization}, and we present an FPT algorithm to efficiently solve this restricted problem.
\end{abstract}

\section{Introduction}
Evolutionary histories are often represented by phylogenetic trees, and more recently, by phylogenetic networks. 
Knowing the evolutionary history of a species is vital for understanding their biology. 
Therefore, it is important to have methods for finding phylogenetic networks that accurately represent the true evolutionary histories. Many methods exist to find evolutionary histories; some are purely combinatorial, others have a likelihood component as well. Here, we focus mainly on the purely combinatorial problems. 

One classic combinatorial method is to solve {\sc Hybridization}: given a set of trees, find the simplest network that displays these trees \cite{baroni2005framework}. 
Unfortunately, the problem is NP-hard, even on inputs of two trees \cite{bordewich2007computing}.
For this problem, it is assumed we can construct accurate phylogenetic trees for small parts of the genomes of the studied taxa. When the input consists of only two or three trees, it can be solved relatively efficiently---EPT time \cite{whidden2013fixed,van2016hybridization}---even though the problem is already NP-hard in that case. For an input consisting of three trees or more, there is still an FPT algorithm \cite{van2013quadratic}, but it is not practical. In these cases, it is useful to limit the search space to networks with a restricted structure, such as tree-child networks \cite{van2019practical}, or temporal networks \cite{humphries2013cherry}.

Another combinatorial approach for finding phylogenetic networks is to combine smaller networks. 
The smaller networks are often assumed to have certain properties. For example, it may be assumed that we are given all strict subnetworks containing the full set of leaves. In that case, it is possible to reconstruct a level-$k$ tree-child network from all its level-$(k-1)$ subnetworks \cite{murakami2019reconstructing}. Another assumption could be that the input consists of all subnetworks obtained by removing exactly one leaf \cite{van2018leaf}. Instead of having almost all leaves, the subnetworks can also be allowed to have only few leaves. For example, low level networks can be reconstructed from their full set of binets \cite{huber2017reconstructing,van2017binet}, trinets \cite{oldman2016trilonet,van2014trinets} or quarnets \cite{huber2018quarnet}.

In practice, it may not be easy to find \textit{all} subnetworks. This renders many of the previously mentioned methods useless. Furthermore, these methods typically only work for low level networks. This means that they cannot be used when the phylogenetic signal comes from a complicated evolutionary history, or if there is some randomness in the data, complicating the data as well.

In this paper, we combine networks that all contain the full set of leaves, but we do not assume we have all the subnetworks of the network we want to find. The problem we solve is analogous to {\sc Hybridization}, but with networks as the input, {\sc Network Hybridization}: Given a set of networks with taxa $X$, find a network $N$ with minimal reticulation number, that displays all input networks. 
Since this is a generalization of the {\sc Hybridization} problem, the problem remains NP-hard in general, even for inputs of two networks.
We show that for the restricted problem on tree-child (topologically restricted class of networks; defined formally in Section~\ref{sec:Preliminaries}) inputs and output, we can use tree-child sequences to obtain an FPT algorithm.
This FPT algorithm is an extension of the one introduced in \cite{van2019practical} in which they considered tree inputs; the tree-child sequence approach was first introduced in \cite{linz2019attaching}.
We also comment briefly on how some measure of an optimal solution to the {\sc Network Hybridization} problem can be characterized by solving this restricted problem.

\paragraph{Structure of the paper}
We start with a quick introduction of relevant concepts from mathematical phylogenetics in Section~\ref{sec:Preliminaries}. 
Then, in Section~\ref{sec:NwHyb}, we formally introduce {\sc Tree-child Network Hybridization} and prove its relation to tree-child sequences.
This section also relates this problem to the more general {\sc Network Hybridization}.
In Section~\ref{sec:CountingCherries}, we lay the theoretical foundation to extend the algorithm in \cite{van2019practical} that takes inputs of trees to also work for inputs of networks.
As a last theoretical section in the paper, we present an FPT algorithm that solves {\sc Tree-Child Network Hybridization} (Section~\ref{sec:Algorithms}).
We conclude the paper with a discussion, including some open questions (Section~\ref{sec:Discussion}).

\section{Preliminaries}\label{sec:Preliminaries}
The main objects of study for this paper are phylogenetic networks. These graphs are used in biology to represent evolutionary scenarios for a given set of species.

\begin{definition}
A \emph{(rooted phylogenetic) network} on a finite set of \emph{taxa} $X$ is a directed acyclic graph with 
\begin{itemize}
    \item one node of indegree-0 and outdegree-1, the \emph{root};
    \item nodes of indegree-1 and outdegree-0 labelled bijectively with $X$, the \emph{leaves};
    \item nodes of indegree-1 and outdegree-2, the \emph{tree nodes};
    \item nodes of indegree greater than 1 and outdegree-1, the \emph{reticulations}.
\end{itemize}
\end{definition}

If all the reticulation nodes have indegree-2, the network is called \emph{binary}.
An edge $uv$ is called a \emph{tree edge} if $v$ is a tree node or leaf, and a \emph{reticulation edge} if $v$ is a reticulation.
The vertex~$u$ is the \emph{parent} of~$v$, and~$v$ is the \emph{child} of~$u$.
The \emph{reticulation number}~$r(N)$ of a network~$N$ is the total number of reticulation edges minus the total number of reticulations.

A network is \emph{stack-free} if every reticulation has a child that is a tree node or a leaf.
A network is \emph{tree-child} if it is stack-free and every tree node has a child that is a tree node or a leaf.
We now define some relevant notation for local structures in phylogenetic networks.

\begin{definition}
Let $N$ be a network on $X$ and $x,y\in X$ two leaves. Then we say $N$ has a \emph{cherry} $\{x,y\}$ if the parent of $x$ is the parent of $y$; we say that $N$ has a \emph{reticulated cherry} $(x,y)$ if the parent of $x$ is a reticulation, and the parent of $y$ is a parent of this reticulation.
If~$(x,y)$ is a cherry or a reticulated cherry in~$N$, then it is called a \emph{reducible pair}.
\end{definition}

Tree-child sequences are built on the notion of \emph{reducing} cherries and reticulated cherries from networks.

\begin{definition}
Let $N$ be a network on $X$, and $(x,y)$ a pair of leaves.
Let~$p_x$ and~$p_y$ denote the parents of~$x$ and~$y$ in~$N$, respectively
Then \emph{reducing} $(x,y)$ in $N$ results in a network $N(x,y)$ obtained as follows:
\begin{itemize}
    \item If $\{x,y\}$ is a cherry in $N$, remove $x$ and the pendant edge $p_xx$, and suppress $p_x$ if it has become a degree-2 node;
    \item If $(x,y)$ is a reticulated cherry in $N$, remove the reticulation edge $p_yp_x$ and suppress $p_x$ or $p_y$ if it has become a degree-2 node.
    \item Otherwise, $N(x,y):=N$.
\end{itemize}
\end{definition}

The reversal of reducing cherries and reticulated cherries can be done by \emph{adding} ordered pairs of leaves to the network.

\begin{definition}\label{def:Const}
Let~$N$ be a network and let~$(x,y)$ be reducible pair.
Then we may \emph{construct}~$N$ from~$N(x,y)$---also called \emph{add~$(x,y)$ to~$N(x,y)$}---by applying the following.
\begin{enumerate}
    \item If $x$ is a leaf in~$N(x,y)$ (i.e., if~$(x,y)$ is a reticulated cherry in~$N$), and
    \begin{enumerate}
        \item if $p$, the parent of $x$ in $N(x,y)$, is a reticulation then add a node $q$ directly above $y$, and add an edge $qp$.\label{def:ConstNoStack}
        \item otherwise, add nodes~$p$ and $q$ directly above $x$ and $y$ respectively, and add an edge $qp$.
    \end{enumerate}
    \item If $x$ is not a leaf in $N(x,y)$ (i.e., if~$(x,y)$ is a cherry in~$N$) then add a labelled node $x$, insert a node $q$ directly above $y$, and add an edge $qx$.
\end{enumerate}
\end{definition}

The above notion of adding an ordered pair of leaves~$(x,y)$ to a network~$N$ is well-defined if~$y$ is already a leaf in~$N$.
If this is indeed the case, we may obtain a network from a sequence of ordered pairs by iteratively adding ordered pairs to an existing network.
To do so, we impose the following condition on our sequence of ordered pairs:
\emph{The second coordinate of each pair has to occur as a first coordinate in the remainder of the sequence, or as the second coordinate of the last pair.}
Then, the following procedure constructs a network from a sequence.

\begin{procedure}
  \caption{ConstructNetworkFromSequence($S$)\label{alg:construct-network}}
  \KwIn{A sequence of ordered pairs $S=(x_1,y_1)\cdots(x_n,y_n)$\;}
  \KwOut{The network that can be constructed from $S$\;}
  Set $N$ to be the tree on one leaf $y_n$\;
  \For{$i=n,\ldots,1$}{
    \uIf{$x_i$ is a leaf of $N$}
    {
      \uIf{the parent of~$x_i$ is a reticulation}
      {
      Let~$p_x$ denote the parent of~$x_i$\;
      }
      \Else
      {
      Subdivide the incoming edge of $x_i$ with a node $p_x$\;
      }
      Subdivide the incoming edge of $y_i$ with a node $p_y$\;
      Add the edge $p_yp_x$ to $N$\;
    }
    \Else
    {
      Subdivide the incoming edge of $y_i$ with a node $p_y$\;
      Add a new node labelled $x_i$ to $N$\;
      Add the edge $p_yx_i$ to $N$\;
    }
  } 
  \Return $N$\;
\end{procedure}

Note that because we only add reticulation edges to existing reticulation nodes wherever possible (Line~$4$), the network obtained by using the above procedure is always stack-free.
Imposing another condition: \emph{no first coordinate leaf is used as a second coordinate in the remainder of the sequence} on the sequence ensures that the network we obtain is tree-child.
With this in mind, we formally define a tree-child sequence.

\begin{definition}\label{def:CPS}
A \emph{tree-child sequence (TCS)} is a sequence of ordered pairs of two leaves such that the following conditions hold:
\begin{itemize}
    \item the second coordinate of each pair has to occur as a first coordinate in the remainder of the sequence or as the second coordinate of the last pair;
    \item no first coordinate leaf is used as a second coordinate in the rest of the sequence.
\end{itemize}
\end{definition}

Let $S$ be a TCS, that involves the leaves $X$. Then, the \emph{weight} of $S$ is~$w(S) = |S|-|X|+1$.
Given a sequence of ordered pairs~$S=S_1S_2\cdots S_{|S|}$, we let~$NS$ denote the network
\[NS:= (\cdots((NS_1)S_2)\cdots S_{|S|-1})S_{|S|}= NS_1S_2 \cdots S_{|S|}.\]
We introduce some notation for subsequences of a sequence $S$.
For~$i\in[|S|]$, we use the following notation for subsequence of~$S$.
The~$i$th ordered pair of~$S$ is $S_i = (x_i,y_i)$.
The first~$i$ ordered pairs in~$S$ is denoted by~$S_{[:i]} = (x_1,y_1),\dots, (x_i, y_i)$.
The subsequence of~$S$ without the first~$i$ ordered pairs is denoted by~$S_{[i+1:]} = (x_{i+1},y_{i+1}), (x_{i+2},y_{i+2}), \dots, (x_n,y_n)$.
We say that the leaves~$x_1,\ldots, x_i$ are \emph{forbidden} for~$S_{[:i]}$.
Forbidden leaves do not appear as a second coordinate leaf in a TCS (by the second condition of TCSs).

We say $S$ \emph{reduces $N$ to the leaf $x$} if $NS$ is the tree with the single leaf $x$. Similarly, let $\mathcal{N}$ be a set of networks, then we denote by $\mathcal{N}S$ the set of reduced networks $\{NS:N\in\mathcal{N}\}$, and we say $S$ \emph{reduces $\mathcal{N}$ to $x$} if $NS$ is the one leaf tree $x$ for all $N\in\mathcal{N}$.

We call a sequence~$S'$ of ordered pairs a \emph{partial TCS} if there exists a TCS~$S$ such that~$S_{[:i]} = S'$ for some~$i$.

 \begin{figure}[h]
    \centering
    \includegraphics[width=\textwidth]{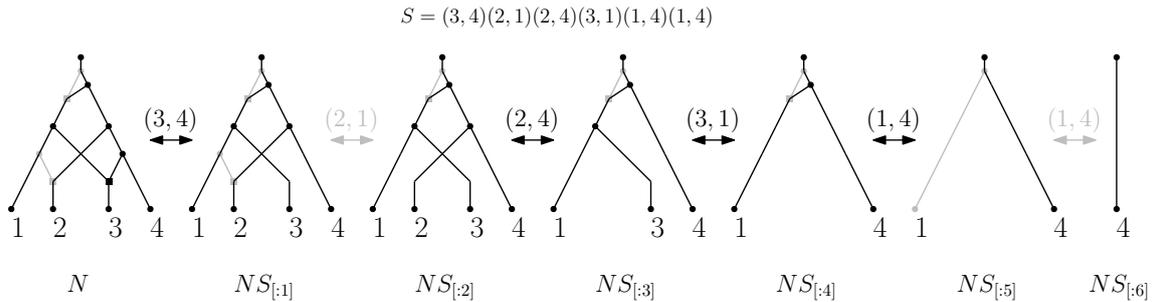}
    \caption{A binary tree-child network~$N$ (grey and black) reduced to a leaf~$4$ by a tree-child sequence~$S$. 
    The reduction is shown as a sequence of networks~$NS_{[:i]}$ for~$i=0,1,\ldots,6$ from left to right, in which an element of~$S$ is applied to the network successively. Each element of $S$ reduces a pair in $N$.
    An example of a cherry~$(3,1)$ can be seen in the network~$NS_{[:3]}$, and a reticulated cherry~$(3,4)$ can be seen in the network~$N$.
    The subnetwork of $N$ consisting of the black edges is also reduced by $S$, and the embedding can be constructed by building both networks simultaneously and keeping track of the edges added by the pairs that change the subnetwork (black pairs and arrows).} 
    \label{fig:SubsequenceEmbedding}
\end{figure}

\section{{\sc Network Hybridization}}\label{sec:NwHyb}
In this section we formally define the {\sc Tree-child Network Hybridization} problem, which asks to find a tree-child network with minimal reticulation number that \emph{displays} all input tree-child networks on the same set of taxa.
We generalize the results presented in~\cite{linz2019attaching} (they considered inputs of trees while we consider inputs of networks) by showing how this problem relates to the more generalized problem of {\sc Network Hybridization} and also to the {\sc Tree-child Weight} problem.
For the {\sc Tree-child Network Hybridization} problem, it turns out that there is not always a solution for some given inputs; we also comment on when this is the case. 

We start by defining what it means for a network to \emph{display} another network.

\begin{definition}
Let~$N$ be a network on the set of taxa,~$X$.
A network~$N'$ on~$Y\subseteq X$ is a \emph{subnetwork} of~$N$ if 
$N'$ can be obtained from~$N$ by deleting reticulation edges, removing leaves not labelled by $Y$, and suppressing all degree-$2$ nodes in the resulting subgraph.
If~$N'$ can be obtained from a subnetwork of~$N$ by contracting edges, then we say~$N$ \emph{displays}~$N'$.
Given a set of networks~$\cal{N}$ on some subsets of the taxa~$X$, then we say that~$N$ \emph{displays}~$\cal{N}$ if~$N$ displays all networks in~$\cal{N}$.
\end{definition}

If a network~$N'$ on~$X$ is a subnetwork of another network~$N$ on~$X$, then an \emph{embedding} of~$N'$ into~$N$ is the mapping of the nodes of~$N'$ to the nodes of~$N$ such that the leaves of~$N'$ are mapped to the leaves of~$N$ with the same labels, and the edges of~$N'$ are mapped to edge-disjoint paths of~$N$.
Our main focus of this paper is to solve the following problem.

\medskip
\fbox{
\parbox{0.8\linewidth}{
{\sc Tree-child Network Hybridization}\\
{\bf Input:} A set of rooted tree-child networks $\mathcal{N}$ on $X$.\\
{\bf Output:} A tree-child network that displays $\mathcal{N}$ with minimal reticulation number if it exists, NO otherwise.}
}
\medskip

Given an optimal tree-child network to the {\sc Tree-child Network Hybridization} problem, one may find a TCS that reduces it.
We will show that the weight of such a TCS is equal to the weight of an optimal solution to the following related problem.

\medskip
\fbox{
\parbox{0.8\linewidth}{
{\sc Tree-child Weight}\\
{\bf Input:} A set of rooted networks $\mathcal{N}$ on $X$.\\
{\bf Output:} A minimal weight TCS that reduces $\mathcal{N}$ if it exists, NO otherwise.}
}
\medskip

Let $\mathcal{N}$ be a set of networks on $X$.
The reticulation number of an optimal solution to {\sc Tree-child Hybridization} is denoted $h_{\mathrm{tc}}(\mathcal{N})$.
The weight of an optimal solution to {\sc Tree-child Weight} is denoted $s_{\mathrm{tc}}(\mathcal{N})$.

For a set of trees~$\mathcal{T}$, the relation $h_{\mathrm{tc}}(\mathcal{T})=s_{\mathrm{tc}}(\mathcal{T})$ holds.
We will extend this result for network inputs. 
We first recall some key lemmas from \cite{YukiRemie}. The first lemma loosely states that 
each TCS reducing a set of networks $\mathcal{N}$ gives a tree-child network with corresponding reticulation number that displays $\mathcal{N}$.
The second lemma gives the reverse statement: 
each tree-child network that displays a set of networks $\mathcal{N}$ gives a TCS of corresponding weight that reduces $\mathcal{N}$.

\begin{lemma}[\cite{YukiRemie},~Lemma~\textbf{8}]\label{lem:reductionImpliesContainment}
Let $N$ and~$N'$ be a tree-child network.
Suppose there is a TCS $S$ that reduces both~$N$ and~$N'$,
such that each element of~$S$ that reduces a pair in~$N'$ also reduces a reducible pair in~$N$.
Then~$N'$ is a displayed by~$N$.
\end{lemma}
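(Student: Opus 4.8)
The plan is to proceed by induction on the length of the TCS $S$, building the embedding of $N'$ into $N$ incrementally while reading $S$ from back to front, i.e.\ following the construction order of Procedure~\ref{alg:construct-network} rather than the reduction order. At each stage~$i$ (going from $i=|S|$ down to $i=0$) we maintain the invariant that $N'S_{[i+1:]}$ is a subnetwork of $NS_{[i+1:]}$, together with an explicit embedding: a map from the nodes of $N'S_{[i+1:]}$ to the nodes of $NS_{[i+1:]}$ sending labelled leaves to the like-labelled leaves and edges to edge-disjoint paths. The base case is $i=|S|$: both $NS$ and $N'S$ are the single-leaf tree on the final leaf $y_{|S|}$ (since $S$ reduces both), so the identity is the required embedding. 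For the inductive step we add the pair $S_i=(x_i,y_i)$ to both networks and argue the invariant is preserved.

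The key case analysis is driven by whether $S_i$ changes $N'$, recalling the hypothesis that every element of $S$ that reduces a pair in $N'$ also reduces a reducible pair in $N$. If $S_i$ does \emph{not} reduce a reducible pair in $N'$, then $N'S_{[i:]}=N'S_{[i+1:]}$, while $NS_{[i:]}$ is obtained from $NS_{[i+1:]}$ by adding $(x_i,y_i)$ according to Definition~\ref{def:Const}; here one checks that the previous embedding still works, since the newly inserted nodes/edges of $NS_{[i:]}$ subdivide or hang off edges already used (or not used) by the embedding, and a subdivision of an edge on an embedded path is harmless, while a brand-new pendant leaf $x_i$ plus its edge is simply not in the image. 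If $S_i$ \emph{does} reduce a reducible pair in $N'$, then by hypothesis it also reduces a reducible pair in $N$, so $(x_i,y_i)$ is added to both networks by the same clause of Definition~\ref{def:Const} (cherry vs.\ reticulated cherry, and within reticulated cherries the "parent is a reticulation'' sub-case is forced consistently because the embedding preserves which node of $N$ the parent $p$ of $x_i$ maps to, hence whether it is a reticulation). One then extends the old embedding: the newly created nodes $p_x,p_y$ (and possibly the new leaf $x_i$) of $N'S_{[i:]}$ are mapped to the correspondingly newly created nodes of $NS_{[i:]}$, and the new edge(s) of $N'S_{[i:]}$ are mapped to the new edge(s) of $NS_{[i:]}$; edge-disjointness is retained because these are fresh edges in both networks and old paths only got subdivided. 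Running the induction down to $i=0$ yields that $N'=N'S_{[1:]}\cdots$ — more precisely $N'$ itself — is a subnetwork of $N$ with an embedding, and since being a subnetwork implies being displayed, the lemma follows.

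The main obstacle I anticipate is the bookkeeping in the reticulated-cherry sub-cases of Definition~\ref{def:Const}: one must verify that whenever $(x_i,y_i)$ is added as a reticulated cherry, the structural alternative taken in $N'$ (whether the parent of $x_i$ is already a reticulation, clause~\ref{def:ConstNoStack}, or not) is compatible with what happens in $N$, so that the two constructions can be matched node-for-node. This requires carrying in the induction a slightly stronger invariant than "there exists an embedding'' — for instance that the embedding maps the parent of each leaf in $N'S_{[i+1:]}$ onto the parent of that leaf in $NS_{[i+1:]}$, or at least records enough local information to determine the Definition~\ref{def:Const} sub-case. A secondary subtlety is confirming that the intermediate objects $N'S_{[i+1:]}$ are genuinely networks (in particular stack-free), which follows from the remark after Procedure~\ref{alg:construct-network} since we only ever attach to existing reticulations when possible; and that the "$S$ reduces both $N$ and $N'$'' hypothesis indeed forces the shared final leaf, giving a legitimate base case. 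Apart from these, the argument is a routine but careful induction.
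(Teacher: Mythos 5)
The paper does not prove this lemma itself---it imports it verbatim from \cite{YukiRemie} (Lemma~8)---so there is no in-paper proof to compare against; the only hint the paper gives is the caption of Figure~\ref{fig:SubsequenceEmbedding}, which describes exactly your strategy: run the construction procedure backwards on both networks simultaneously and carry an explicit embedding. So your overall architecture (induction on the sequence in construction order, invariant ``$N'S_{[:i]}$ embeds into $NS_{[:i]}$'', case split on whether $S_i$ acts on $N'$) is the intended one. Two small things first: your indices are off---the intermediate objects in the induction are $N'S_{[:i]}$ and $NS_{[:i]}$ (the networks after the first $i$ reductions), not $N'S_{[i+1:]}$ and $NS_{[i+1:]}$, which would mean applying the \emph{suffix} of $S$ to the original networks; and the base case needs the (easy) observation that a TCS reducing a network leaves exactly the leaf $y_{|S|}$, so both networks collapse to the same single-leaf tree.

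The genuine gap is in the case where $S_i$ reduces a reticulated cherry in both networks. You assert that the two reconstructions then take the same clause of Definition~\ref{def:Const} because ``the sub-case is forced consistently''---this is false. Whether clause~\ref{def:ConstNoStack} or its alternative applies depends on whether the parent of $x_i$ is \emph{already} a reticulation in the partially rebuilt network, and $N$ may have accumulated reticulation edges above $x_i$ (coming from pairs of $S$ that act on $N$ but not on $N'$) that $N'$ has not. So it can happen that in $NS_{[:i]}$ the parent of $x_i$ is a reticulation (clause (a): only a new node $q$ above $y_i$ is created) while in $N'S_{[:i]}$ it is not (clause (b): a fresh reticulation $p'$ is inserted above $x_i$). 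In that asymmetric situation you cannot ``map new nodes to new nodes'': the fresh reticulation $p'$ of $N'$ has no fresh counterpart in $N$ and must be identified with the pre-existing reticulation $p$ of $N$, which in turn requires the inductive invariant to guarantee that the edge entering $x_i$ in $N'S_{[:i]}$ is mapped to a path passing through $p$. You correctly sense that a stronger invariant is needed and even propose one (``the embedding maps the parent of each leaf of $N'$ to the parent of that leaf in $N$''), but that candidate is itself too strong to survive the steps where $S_i$ modifies only $N$. Until a precise, maintainable invariant is formulated and verified through this asymmetric sub-case, the inductive step is not closed; everything else in the proposal is routine.
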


\begin{lemma}[\cite{YukiRemie},~Corollary~\textbf{4}]\label{lem:ContainmentImpliesReductionTC}
Let $N,N'$ be tree-child networks on $X$ such that $N'$ is displayed by $N$. If a TCS $S$ reduces $N$, then $S$ also reduces $N'$.
\end{lemma}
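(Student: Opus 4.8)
I would prove the lemma by induction on the prefix length $i \in \{0,1,\dots,|S|\}$ of the following invariant: \emph{$N'S_{[:i]}$ is a tree-child network displayed by $NS_{[:i]}$, and the leaf set of $N'S_{[:i]}$ is contained in that of $NS_{[:i]}$.} The case $i=0$ is exactly the hypothesis of the lemma. Granting the invariant for $i=|S|$, the lemma follows at once: $NS_{[:|S|]} = NS$ is by assumption the single-leaf tree on some leaf $x_0$, and the only network displayed by a single-leaf tree is that tree itself, so $N'S = N'S_{[:|S|]}$ is the single-leaf tree on $x_0$, i.e.\ $S$ reduces $N'$.

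For the inductive step set $M := NS_{[:i-1]}$, $M' := N'S_{[:i-1]}$ and $(x,y) := S_i$; by the inductive hypothesis $M'$ is a tree-child network displayed by $M$ with leaf set contained in that of $M$. A preliminary point I would establish is that $y$ is still a leaf of $M'$: a leaf that has already been removed from $N'$ must have occurred as a first coordinate in $S_{[:i-1]}$ and is therefore forbidden, and by the second condition in the definition of a TCS a forbidden leaf cannot occur as the second coordinate of $S_i$. That $M(x,y)$ and $M'(x,y)$ are again tree-child is routine, since reducing a reducible pair preserves tree-childness and reducing a non-reducible pair changes nothing. It remains to show $M'(x,y)$ is displayed by $M(x,y)$, and here I would fix a witness that $M$ displays $M'$ (a subnetwork of $M$ equal to $M'$, together with the associated embedding $\phi$) and case-split on how $(x,y)$ sits in $M$. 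If $(x,y)$ is not reducible in $M$, then $M(x,y) = M$ while $M'(x,y)$ is a subnetwork of $M'$ --- every reduction yields a subnetwork --- hence of $M$. If $\{x,y\}$ is a cherry in $M$ with common parent $p$, then either $x$ is not a leaf of $M'$, in which case $(x,y)$ does nothing to $M'$ and, $x$ being absent from $M'$, $M'(x,y) = M'$ is still a subnetwork of $M(x,y)$; or $x$ is a leaf of $M'$, in which case $\{x,y\}$ is also a cherry in $M'$, because the tree edges $px$, $py$ and the unique incoming edge of $p$ are not reticulation edges and so survive in the subnetwork $M'$ (and $p$, keeping its two children, is not suppressed); the reduction then deletes $x$ from both $M$ and $M'$ and $\phi$ restricts to an embedding of $M'(x,y)$ into $M(x,y)$.

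The remaining case --- $(x,y)$ a reticulated cherry in $M$, say with $p_x$ a reticulation and $p_y$ a parent of $p_x$ --- is where I expect the real work to lie. One has to split further according to which incoming edges of $p_x$, and which of the structure around $p_y$, were deleted in passing from $M$ to the subnetwork $M'$: depending on this, $(x,y)$ is a reticulated cherry in $M'$, or a cherry in $M'$ --- in which case $x$ is removed from $M'$ although no leaf is removed from $M$, which is harmless since it only shrinks the leaf set of $M'$ further inside that of $M$ --- or not reducible in $M'$; in each subcase one must check that the edge or leaf deleted from $M'$ corresponds, under $\phi$, to part of what is deleted from $M$, so that $M'(x,y)$ is again a subnetwork of $M(x,y)$, while keeping track of how suppressions of degree-$2$ nodes cascade. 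This subcase enumeration is the main obstacle; a secondary point needing care is the leaf-set bookkeeping throughout the induction, which is precisely why the preliminary observation about $y$, and hence the defining conditions of a TCS, is needed.
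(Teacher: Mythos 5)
First, a point of reference: the paper does not prove this lemma itself---it is imported verbatim from \cite{YukiRemie} (Corollary~4 there), whose argument is indeed the induction you propose, namely that a single reduction step preserves the display relation. So your overall strategy, your preliminary observation that the second condition on TCSs guarantees $y$ survives in $N'S_{[:i-1]}$, and your treatment of the base case, the conclusion, and the cherry/non-reducible cases are all on the right track.

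The genuine gap is that the reticulated-cherry case, which you correctly identify as ``where the real work lies,'' is the entire mathematical content of the lemma, and you describe the needed subcase enumeration without carrying any of it out. Concretely, when $(x,y)$ is a reticulated cherry in $M:=NS_{[:i-1]}$, the reduction deletes the reticulation edge $p_yp_x$ from $M$, and the one way your induction could fail is if the embedding $\phi$ of $M'$ into $M$ uses that edge while $(x,y)$ is \emph{not} a reticulated cherry in $M'$ (so that nothing gets deleted from $M'$ to compensate). Ruling this out is not a routine check: one must argue that if some edge of $M'$ maps to a path through $p_yp_x$, then (since $p_x$ has outdegree~$1$ with leaf child $x$) the head of that edge is the parent of $x$ in $M'$, and then use edge-disjointness of $\phi$ together with the fact that the tree node $p_y$ has indegree~$1$ to force the tail of that edge to be the parent of $y$ in $M'$---i.e., $(x,y)$ is then necessarily a reticulated cherry in $M'$ as well. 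Without this verification the inductive step is unproved. A secondary gap: you fix ``a subnetwork of $M$ equal to $M'$,'' but the paper's notion of display also allows contracting edges (relevant because reticulations may have indegree greater than~$2$), so the witness you carry through the induction should be a subnetwork together with a set of contracted edges, and the case analysis must track how the deleted edge interacts with the contractions as well.
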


Unlike when the input consists of only trees, a solution to {\sc Tree-child Network Hybridization} does not always exist when the input may also contain networks.

\begin{definition}
A set of tree-child networks~$\mathcal{N}$ are \emph{tree-child compatible} if there exists a tree-child network that displays all networks in~$\mathcal{N}$.
\end{definition}

Our next step, is to prove that there is a strong connection between tree-child compatibility and TCSs.

\begin{lemma}
 Let $\mathcal{N}$ be a set of tree-child networks on $X$. Then $\mathcal{N}$ is tree-child compatible iff there exists a TCS that reduces $\mathcal{N}$. Furthermore, if a solution exists, then $h_{\mathrm{tc}}(\mathcal{N})=s_{\mathrm{tc}}(\mathcal{N})$.
\end{lemma}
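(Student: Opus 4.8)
The plan is to shuttle between tree-child sequences and tree-child networks, using Lemma~\ref{lem:reductionImpliesContainment} in one direction and Lemma~\ref{lem:ContainmentImpliesReductionTC} in the other, and then to obtain the weight equality by pinning down how the operations in Definition~\ref{def:Const} affect the reticulation number. For the direction ``a reducing TCS implies tree-child compatibility'', suppose $S$ is a TCS reducing $\mathcal{N}$, and let $N$ be the network produced by Procedure~\ref{alg:construct-network} on $S$; as already noted in the text, $N$ is tree-child because $S$ is a TCS. I would first record the (expected routine, and presumably already in \cite{YukiRemie}) fact that constructing and reducing are mutually inverse: writing $N^{(i)}$ for the network built from the suffix $S_{[i:]}$, each iteration of the loop satisfies $N^{(i)}S_i = N^{(i+1)}$, since adding $S_i$ makes it a reducible pair whose reduction undoes the addition. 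Hence $S$ reduces $N$ to a single leaf and, crucially, \emph{every} element of $S$ reduces a reducible pair of $N$ along the way. The hypothesis of Lemma~\ref{lem:reductionImpliesContainment} is therefore met for $N$ and each $N'\in\mathcal{N}$ (the condition ``each element of $S$ that reduces a pair in $N'$ also reduces a reducible pair in $N$'' holds vacuously), so $N$ displays every $N'\in\mathcal{N}$, and $\mathcal{N}$ is tree-child compatible.

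For the converse, assume $\mathcal{N}$ is tree-child compatible and fix a tree-child network $N$ displaying $\mathcal{N}$. Here I invoke the known structural fact (cf.\ \cite{YukiRemie,linz2019attaching}) that every tree-child network is reduced to a single leaf by some TCS, and indeed by one of weight equal to its reticulation number; pick such an $S$. By Lemma~\ref{lem:ContainmentImpliesReductionTC}, $S$ also reduces each $N'\in\mathcal{N}$, so $S$ reduces $\mathcal{N}$. This gives the equivalence; in particular a solution to {\sc Tree-child Network Hybridization} exists iff a solution to {\sc Tree-child Weight} exists, so the phrase ``if a solution exists'' is unambiguous.

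It remains to show $h_{\mathrm{tc}}(\mathcal{N})=s_{\mathrm{tc}}(\mathcal{N})$ when solutions exist, which I would do by two inequalities, using the bookkeeping that along a TCS with no trivial steps a cherry reduction removes exactly one leaf while a reticulated-cherry reduction decreases the reticulation number by exactly one (a short case check against Definition~\ref{def:Const}: cases 1(a) and 1(b) each add one net reticulation edge, case 2 adds a leaf and no reticulation). For ``$\le$'': take an optimal TCS $S$ for {\sc Tree-child Weight} and build $N$ from it as above; by the first paragraph $N$ is tree-child and displays $\mathcal{N}$, and since every pair of $S$ is a genuine reduction of $N$, exactly $|X|-1$ of them are cherries and the remaining $|S|-|X|+1 = w(S)$ are reticulated cherries, so $r(N)=w(S)=s_{\mathrm{tc}}(\mathcal{N})$ and $h_{\mathrm{tc}}(\mathcal{N})\le s_{\mathrm{tc}}(\mathcal{N})$. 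For ``$\ge$'': take an optimal tree-child network $N$ for {\sc Tree-child Network Hybridization} and, by the structural fact used above, a TCS $S$ reducing it with $w(S)=r(N)=h_{\mathrm{tc}}(\mathcal{N})$; by Lemma~\ref{lem:ContainmentImpliesReductionTC} it reduces $\mathcal{N}$, so $s_{\mathrm{tc}}(\mathcal{N})\le w(S)=h_{\mathrm{tc}}(\mathcal{N})$.

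The main obstacle I anticipate is not any individual step but making the weight/reticulation-number dictionary watertight: one must be certain that an optimal TCS can be taken with no ``trivial'' pairs (so that its weight genuinely equals the reticulation number of the associated network), that the construct/reduce duality holds exactly as stated, and that each operation of Definition~\ref{def:Const} changes the reticulation number by precisely the claimed amount in all cases, including non-binary reticulations built up by repeated application of case~1(a). Everything else reduces to a direct application of Lemmas~\ref{lem:reductionImpliesContainment} and~\ref{lem:ContainmentImpliesReductionTC}.
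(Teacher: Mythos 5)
Your proposal is correct and follows essentially the same route as the paper's proof: build the network from the sequence and apply Lemma~\ref{lem:reductionImpliesContainment} in one direction, take a TCS for an optimal displaying network and apply Lemma~\ref{lem:ContainmentImpliesReductionTC} in the other, with the weight/reticulation-number identity (cited in the paper as Lemma~3 of \cite{YukiRemie}) supplying the equality $h_{\mathrm{tc}}(\mathcal{N})=s_{\mathrm{tc}}(\mathcal{N})$. You merely spell out the bookkeeping behind that identity rather than citing it, which is fine.
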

\begin{proof}
 Suppose that $\mathcal{N}$ is tree-child compatible. 
 Then there exists a tree-child network~$N$ that displays $\mathcal{N}$, with minimal reticulation number. 
 Now let $S$ be a TCS for $N$. By Lemma~\ref{lem:ContainmentImpliesReductionTC}, $S$ also reduces all displayed networks of $N$, and hence it reduces $\mathcal{N}$. Furthermore, the weight of $S$ is equal to the reticulation number of $N$ by Lemma~3 from \cite{YukiRemie}, 
 (originally proved slightly less strong in \cite{linz2019attaching}).
 
 Now suppose there exists a TCS $S$ that reduces $\mathcal{N}$. Let $N$ be the tree-child network that can be constructed from $S$. Then, by Lemma~\ref{lem:reductionImpliesContainment}, $N$ displays $\mathcal{N}$. Because $N$ is the network corresponding to $S$, the reticulation number of $N$ is equal to the weight of $S$.
\end{proof}

\subsection{The existence of a tree-child solution}
In the previous subsection, we have found a strong connection between {\sc Tree-child Network Hybridization} and {\sc Tree-child Weight} for feasible solutions. Not all inputs, however, are feasible. Here, we investigate the feasibility of inputs, and how to deal with infeasible inputs.

\begin{lemma}\label{lem:reducingReticCherries}
Let $N$ be a tree-child network with reticulated cherry $(x,y)$, then any TCS that reduces $N$ must contain the pair $(x,y)$.
\end{lemma}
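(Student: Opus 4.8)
The plan is to argue by contradiction. Suppose $S$ is a TCS that reduces $N$ but $(x,y)\notin S$; I will track the local structure around $x$ and $y$ through every intermediate network $NS_{[:j]}$, $j=0,1,\dots,|S|$, of the reduction. Write $p_x$ for the parent of $x$ in $N$ and $p_y$ for the parent of $y$; since $N$ is tree-child, $p_x$ is a reticulation whose unique child is $x$, and $p_y$ is a tree node with children $y$ and $p_x$.

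First I would isolate two elementary facts about reductions. (a) A leaf is deleted only when it lies in a cherry; hence a leaf whose current parent is a reticulation is never deleted. (b) An incoming edge of a reticulation $\rho$ is removed only by a reticulated-cherry reduction $(a,b)$ with $p_a=\rho$ that removes it directly; suppressing the tail of such an edge merely reroutes it and keeps the indegree of $\rho$ unchanged. Since the child of $p_x$ stays $x$ as long as $p_x$ is a reticulation, fact (b) shows that the edge $p_yp_x$ is removed only by the reduction $(x,y)$, and that every reduction lowering the indegree of $p_x$ has the form $(x,b)$.

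The heart of the proof is an invariant, established by induction on $j$: in $NS_{[:j]}$ either (i) $p_x$ is a reticulation with child $x$ and $p_y$ is a tree node with children $y$ and $p_x$; or (ii) $\{x,y\}$ is a cherry whose parent is $p_y$. Furthermore, state (ii) can only be reached from state (i) by suppressing $p_x$ once its indegree has dropped to $1$; this requires at least one earlier reduction of the form $(x,b)$ with $b\neq y$ (the edge $p_yp_x$ itself survives, since $(x,y)$ is excluded), so by the time state (ii) holds, $x$ has occurred as a first coordinate of $S$ and is therefore forbidden for the rest of $S$. The inductive step is the main obstacle: in each state one checks that no reduction — whether of the form $(x,z)$, $(z,x)$, $(y,w)$, $(w,y)$, or one not involving $x$ or $y$ — can leave the union of states (i) and (ii), the only interesting case being $(x,z)$ in state (i), which either keeps state (i) (if $p_x$ still has indegree $\geq 2$) or passes to state (ii); here one must also accommodate reticulations of indegree greater than $2$, where several reductions $(x,b)$ precede the suppression of $p_x$.

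Finally I would put the pieces together. By the invariant the parent of $y$ is always $p_y$, whose second child is the non-leaf $p_x$ in state (i) and the leaf $x$ in state (ii); so $y$ can be deleted only in state (ii), and then only by the cherry reduction $(y,x)$. But state (ii) is reached only after $x$ has become forbidden, so $(y,x)$ does not occur in $S$; hence $y$ is never deleted and is the unique leaf of $NS$. Consequently $x$ (which is distinct from $y$) is deleted at some step, necessarily while it lies in a cherry; by the invariant the only cherry ever containing $x$ is $\{x,y\}$, so that step uses the pair $(x,y)$, contradicting $(x,y)\notin S$. As an alternative to this concluding step, one may note that the restriction of $N$ to $\{x,y\}$ is a subnetwork of $N$, hence displayed by $N$, and — provided restricting a tree-child network to a leaf subset again yields a tree-child network — it is a tree-child network whose only reducible pair is $(x,y)$; since a TCS reducing a non-trivial network must use one of its reducible pairs, Lemma~\ref{lem:ContainmentImpliesReductionTC} then yields $(x,y)\in S$ at once.
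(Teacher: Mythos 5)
Your argument is correct and follows essentially the same route as the paper's proof: the reticulated cherry $(x,y)$ can only disappear either directly via the pair $(x,y)$ or by first becoming a cherry through reductions of the form $(x,\cdot)$, after which $x$ is forbidden and $(y,x)$ is unavailable, forcing $(x,y)$ to occur. Your invariant on the states of $p_x$ and $p_y$ is a more detailed formalization of the case analysis the paper states in two sentences.
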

\begin{proof}
Suppose $S$ is a TCS that reduces $N$. The only ways to reduce the reticulated cherry $(x,y)$ are by either reducing it directly with the pair $(x,y)$, or by first turning it into a cherry $\{x,y\}$ and then reducing it with a pair~$(x,y)$ or~$(y,x)$. This second option, however, leads to a contradiction: To make the reticulated cherry into a cherry, we must reduce a pair of the form $(x,\cdot)$; however, any sequence that includes $(x,\cdot)$ and later $(y,x)$ cannot be tree-child. 
\end{proof}

Using the connection between tree-child compatibility and the existence of TCSs, we can prove an obstruction to tree-child compatibility of Lemma~\ref{lem:reticCherryTCCompatible}. This obstruction will turn out to be quite valuable in the proofs in the rest of this paper, as it allows us to quickly check whether a set of networks is tree-child compatible.

 \begin{figure}[h]
    \centering
    \includegraphics[width=.6\textwidth]{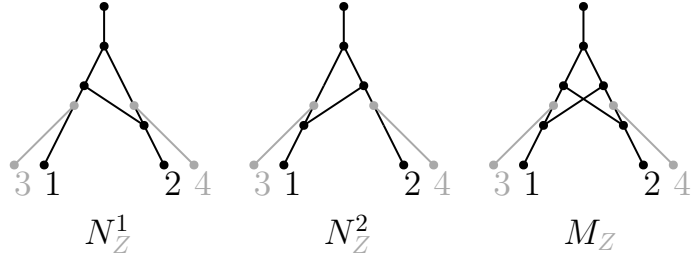}
    \caption{Two networks $\mathcal{N}=\{N^1,N^2\}$ that are tree-child incompatible (black parts only). The network $M$ displays $\mathcal{N}$, but it is not tree-child. By adding leaves $Z=\{3,4\}$ to $M$, we get the network $M_Z$ which is tree-child. Then, adding these leaves in the right places to $N^1$ and $N^2$, we get the set of networks $\mathcal{N}_Z\in\mathcal{N}^+$ on $X\cup Z$, that are displayed by the tree-child network $M_Z$.} 
    \label{fig:TCCompatibility}
\end{figure}

\begin{lemma}\label{lem:reticCherryTCCompatible}
Let~$N_1, N_2$ be tree-child networks on the same set of leaves~$X$.
For any pair of leaves~$x,y$, if~$N_1$ contains the reticulated cherry~$(x,y)$ and~$N_2$ contains the reticulated cherry~$(y,x)$, then~$N_1$ and~$N_2$ are not tree-child compatible.
\end{lemma}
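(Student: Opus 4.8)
The plan is to argue by contradiction, using the characterization of tree-child compatibility in terms of TCSs together with Lemma~\ref{lem:reducingReticCherries}. Suppose $N_1$ and $N_2$ were tree-child compatible. By the lemma relating tree-child compatibility to reducing sequences (the unnamed lemma preceding Lemma~\ref{lem:reducingReticCherries}), there would exist a single TCS $S$ that reduces $\mathcal{N}=\{N_1,N_2\}$, i.e.\ $S$ reduces both $N_1$ and $N_2$ to a single leaf.

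Next I would apply Lemma~\ref{lem:reducingReticCherries} twice. Since $N_1$ contains the reticulated cherry $(x,y)$ and $S$ reduces $N_1$, the sequence $S$ must contain the ordered pair $(x,y)$. Since $N_2$ contains the reticulated cherry $(y,x)$ and $S$ reduces $N_2$, the sequence $S$ must also contain the ordered pair $(y,x)$. Note that $x\neq y$, since a reticulated cherry is a pair of two (distinct) leaves, so $(x,y)$ and $(y,x)$ are genuinely different pairs occurring in $S$.

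Finally I would derive the contradiction directly from the second defining condition of a TCS (no first-coordinate leaf is used as a second coordinate later in the sequence). Consider which of the two pairs appears first in $S$. If $(x,y)$ precedes $(y,x)$, then $x$ occurs as a first coordinate and later occurs as a second coordinate (in $(y,x)$), violating the condition; symmetrically, if $(y,x)$ precedes $(x,y)$, then $y$ occurs as a first coordinate and later as a second coordinate, again a violation. Either way $S$ is not a TCS, a contradiction. Hence no such $S$ exists, and by the compatibility characterization $N_1$ and $N_2$ are not tree-child compatible.

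There is no real obstacle here; the work is entirely in assembling the two cited lemmas and invoking the TCS definition. The only point that needs a moment's care is to confirm that $(x,y)$ and $(y,x)$ are distinct pairs so that "one of them appears first'' is meaningful, which follows immediately from $x\neq y$.
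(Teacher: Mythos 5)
Your proof is correct and follows essentially the same route as the paper: obtain a single TCS reducing both networks (the paper gets it from a TCS of a displaying network via Lemma~\ref{lem:ContainmentImpliesReductionTC}, you via the compatibility characterization, which amounts to the same thing), apply Lemma~\ref{lem:reducingReticCherries} twice, and derive a contradiction from $S$ containing both $(x,y)$ and $(y,x)$. Your explicit appeal to the second TCS condition just spells out what the paper leaves implicit.
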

\begin{proof}
Let $N$ be a tree-child network that displays both $N_1$ and $N_2$. Then any TCS $S$ for $N$ reduces both $N_1$ and $N_2$. By Lemma~\ref{lem:reducingReticCherries}, the sequence $S$ must contain the pair $(x,y)$, because $N_1$ has the reticulated cherry $(x,y)$; similarly, $S$ must contain $(y,x)$. This means $S$ is a TCS, but it includes both pairs $(x,y)$ and $(y,x)$, a contradiction. Hence we conclude that $N_1$ and $N_2$ are not tree-child compatible.
\end{proof}

Even if an input is infeasible, we still desire a network that displays all input networks. For this purpose, we can relax the tree-child constraint on output (and input) of the {\sc Tree-child Network Hybridization} problem, giving rise to the following problem.
\medskip\\
\fbox{
\parbox{0.8\linewidth}{
{\sc Network Hybridization}\\
{\bf Input:} A set of rooted networks $\mathcal{N}$ on $X$.\\
{\bf Output:} A network that displays $\mathcal{N}$ with minimal reticulation number.}
}
\medskip\\
This problem can be viewed as the natural extension of the classic {\sc Hybridization} problem for trees. Linz and Semple show that {\sc Hybridization} can be solved by adding leaves in the right place to all input trees, and then solving {\sc Tree-child Hybridization} \cite{linz2019attaching}. This also holds for the network versions of these problems, as the solution to {\sc Network Hybridization} can be made tree-child by adding leaves, and all networks displayed by a tree-child network are tree-child networks as well (Figure~\ref{fig:TCCompatibility}).

\section{An Algorithm for {\sc Tree-child Network Hybridization}}\label{sec:Algorithm}
In this section, we give an FPT algorithm for {\sc Tree-child Network Combination}. 
We extend the algorithm given in \cite{van2019practical} by allowing for inputs to be networks, and by looking for reducible pairs within networks rather than cherries in trees.
Given an input~$\cal{N}$ of tree-child networks, we first look for \emph{trivial} reducible pairs.
We show that it is safe to reduce trivial reducible pairs as soon as we encounter one, in any order.
We then branch on all possible non-trivial reducible pairs of the network, and by showing that the total number of possible reducible pairs at each branching point is at most~$8k$ for the reticulation number~$k$ of the optimal solution, we show that the running time of the algorithm is~$O((8k)^k\cdot \mathrm{poly}(|X|,|\mathcal{N}|)$.

\subsection{Counting Cherries}\label{sec:CountingCherries}
\subsubsection{Trivial pairs}
The algorithm in \cite{van2019practical} reduces \emph{trivial cherries} (a pair of leaves $\{x,y\}$ that appear as a cherry in any input tree containing $x$ and $y$) whenever possible. Here, only looking at trivial cherries is not sufficient. For an input of networks, we will need to reduce \emph{trivial reducible pairs} (trivial pairs for short) whenever possible. A trivial pair is a pair of leaves $(x,y)$ such that all networks either only have the leaf $y$, or they have a reducible pair~$(x,y)$. 
In the following two lemmas, we prove that it is safe to reduce such a pair as soon as we encounter one.

\begin{lemma}[Move to the left]\label{lem:Move1Left}
Let $\mathcal{N}=\{N_1,\ldots,N_I\}$ be a set of tree-child networks on a common set of leaves, and let $S(a,b)(x,y)S'$ be a TCS for $\mathcal{N}$. Suppose that for each $N\in\mathcal{N}S$ we have either $x$ is not a leaf in $N$, or $(x,y)$ is a reducible pair of $N$, and there is at least one network such that the latter holds. Then there is a TCS for $\mathcal{N}$ starting with $S(x,y)$ of length equal to that of $S(a,b)(x,y)S'$.
\end{lemma}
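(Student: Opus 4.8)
The plan is to take the given TCS $S(a,b)(x,y)S'$ and show that the pair $(x,y)$ can be ``moved left'' past $(a,b)$, producing a sequence $S(x,y)(a,b)S'$ (or a small modification thereof) that is still a TCS for $\mathcal{N}$ of the same length. First I would reduce to the key case: it suffices to handle a single transposition, i.e.\ to show that if $S'' (a,b)(x,y)$ is a partial TCS and $(x,y)$ is trivial in the appropriate sense for $\mathcal{N}S''$, then swapping the last two pairs yields another partial TCS that has the same effect on every $N \in \mathcal{N}$; the full statement then follows by applying this to $S'' = S$ and appending $S'$, checking that the tail $S'$ is unaffected because the multiset of first/second coordinates used is unchanged. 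So the heart of the matter is a local commutation lemma for the two consecutive pairs $(a,b)$ and $(x,y)$.

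**The case analysis on overlap of $\{a,b\}$ and $\{x,y\}$.**
I would split according to how the leaf sets $\{a,b\}$ and $\{x,y\}$ interact. If they are disjoint, then in any network $N$ the reducible pair $(a,b)$ and the reducible pair $(x,y)$ involve disjoint local structures (disjoint parents, since the networks are tree-child and the leaves are distinct), so reducing them in either order gives the same network $N(a,b)(x,y) = N(x,y)(a,b)$; one still has to confirm the TCS conditions (second coordinate reappears later, no forbidden-leaf violation), but since the set of pairs is literally the same, the conditions for $S(x,y)(a,b)S'$ reduce to those for $S(a,b)(x,y)S'$, except one must check that after the swap $b$ (the second coordinate of the now-later pair) still reappears --- and it does, because it reappeared in $S(a,b)(x,y)S'$ as part of $S'$ or was the very last leaf. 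The interesting subcases are when the pairs share a leaf. Since $x$ is a first coordinate in $(x,y)$ and never a second coordinate later (TCS condition), $x \notin \{a,b\}$ is forced except possibly $x = a$; and $b$ must reappear after $(a,b)$, so $b \in \{x,y\}$ is possible, most naturally $b = y$ (giving $S \cdots (a,y)(x,y) S'$) or $b = x$ (impossible, as $x$ can't be a later second coordinate... wait, $b=x$ would need $x$ to reappear as a first coordinate, which is fine, so $b = x$ is the subcase $S\cdots(a,x)(x,y)S'$). I would handle $(a,y)(x,y)$, $(a,x)(x,y)$, and $(x,b)(x,y)$ individually, in each case drawing the local picture in a network that has the reticulated-cherry/cherry at $(x,y)$, and verifying that performing $(x,y)$ first still leaves $(a,b)$ as a reducible (or vacuous) pair with the same overall result.

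**Why the triviality hypothesis is used, and the main obstacle.**
The triviality hypothesis --- that for every $N \in \mathcal{N}S$ either $x$ is absent or $(x,y)$ is already a reducible pair, with at least one network of the latter type --- is exactly what guarantees that putting $(x,y)$ immediately after $S$ is legitimate: $(x,y)$ must actually act on some network (so the weight/length bookkeeping works out and it is not a ``no-op'' everywhere), and it acts correctly (reduces a genuine reducible pair) wherever it acts. The subtle point I expect to be the main obstacle is the interaction when $(a,b)$ \emph{creates} the reducible pair $(x,y)$ in some network $N$ --- i.e.\ $(x,y)$ is reducible in $N(a,b)$ but not in $N$ itself. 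The hypothesis is stated for $N \in \mathcal{N}S$, i.e.\ \emph{before} applying $(a,b)$, so this worry is actually excluded by hypothesis: $(x,y)$ is reducible in $N = N(S)$ already. But I must be careful to confirm that reducing $(x,y)$ first does not destroy the reducible pair $(a,b)$ in the networks where $(a,b)$ was supposed to act; the dangerous configuration is when $a$ or $b$ sits on the local structure of the $(x,y)$ reducible pair (e.g.\ $b = y$ and $(x,y)$ is a reticulated cherry whose reduction changes $y$'s parent). Working through that configuration carefully --- showing $(a,y)$ remains reducible, or becomes a trivial no-op with no net effect, after $(x,y)$ is applied --- is where the real content lies, and it is essentially a picture-chase using tree-childness to rule out stacked reticulations and to pin down the local neighborhoods. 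Once that is done, the TCS conditions for the rearranged sequence follow mechanically from those of the original, and appending $S'$ completes the argument.
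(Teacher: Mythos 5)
Your overall strategy---commuting the adjacent pairs $(a,b)$ and $(x,y)$ and splitting on how $\{a,b\}$ and $\{x,y\}$ overlap---is the same as the paper's, and you correctly isolate the overlapping subcases $a=x$, $b=y$, $b=x$ (and that $a=y$ is excluded by the TCS condition). However, the proposal stops short exactly where the lemma is nontrivial, and in one case the stated plan cannot work. For $b=x$: you propose to verify that performing $(x,y)$ first "still leaves $(a,b)$ as a reducible (or vacuous) pair," but the swapped sequence $S(x,y)(a,x)S'$ is not a TCS at all, since $x$ would then be a first coordinate that later reappears as a second coordinate. This case cannot be commuted; it must be shown not to occur. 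The paper does this by observing that a network with both reducible pairs $(x,y)$ and $(a,x)$ forces $a=y$ (a leaf that is simultaneously a first and a second coordinate of reducible pairs must sit in a cherry with the other leaf), whence the original sequence would contain $(y,x)$ before $(x,y)$, contradicting that it is a TCS.

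For $b=y$, the "picture-chase" you defer to is resolved in the paper by a single observation you do not supply: since tree nodes have outdegree $2$, a leaf can be the second coordinate of at most one reducible pair in any given network, so no network in $\mathcal{N}S$ can carry both $(a,y)$ and $(x,y)$. For $a=x$, the paper handles a network that has $(x,y)$ but not $(x,b)$ by noting that $NS(x,y)(x,b)$ is a subnetwork of $NS(x,b)(x,y)=NS(x,y)$ and invoking the fact that a TCS reducing a tree-child network also reduces its subnetworks (Lemma~\ref{lem:ContainmentImpliesReductionTC}); this containment argument is more robust than checking that $(x,b)$ "remains reducible or becomes a vacuous no-op," which is not what happens in general---the earlier reduction can genuinely change what the later pair does, and the point is that this does not matter precisely because of the subnetwork relation. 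As written, then, your proposal is a correct outline of the case structure, but it omits the three arguments that constitute the proof, and the plan for $b=x$ would fail if carried out as described.
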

\begin{proof}
Suppose $b=x$. Then there must be a network in $\mathcal{N}S$ that has both the reducible pairs $(x,y)$ and $(a,x)$. This can only occur if $a=y$: as $x$ is the first as well as the second element of a reducible pair, it must form a cherry with another leaf, namely the leaf $y$. However, $S(y,x)(x,y)S'$ is not a TCS, which contradicts our assumption that $S(a,b)(x,y)S'$ is a TCS for $\mathcal{N}$.

Hence, for the rest of the proof, we assume $b\neq x$. In this case, $S(x,y)(a,b)S'$ is a TCS.
It remains to prove that it reduces $\mathcal{N}$. This is clear if $\{x,y\}\cap\{a,b\}=\emptyset$. 
Observe that~$a\neq y$, as otherwise~$S(a,b)(x,y)S'$ would not have been a TCS to begin with.
Therefore, we still need to check the cases $a=x$ and $b=y$. 

If $a=x$ and a network has both reducible pairs $(x,b)$ and $(x,y)$, then this network has a reticulation with reticulated cherries $(x,b)$ and $(x,y)$. The order of reducing these pairs obviously does not matter for such networks: both options remove the reticulation edges between the parents of $b$ and $y$, and the parent of $x$. For a network $N$ that only has the reducible pair $(x,y)$ after $S$ (and not $(x,b)$), the network $NS(x,y)(x,b)$ is a subnetwork of $NS(x,b)(x,y)=NS(x,y)$. This means $S(x,y)(x,b)S'$ also reduces $N$ \cite{YukiRemie}. Hence if $a=x$, the sequence $S(x,y)(a,b)S'$ is a TCS for $\mathcal{N}$.

Now suppose $b=y$. Let $N$ be a network that has both reducible pairs $(a,y)$ and $(x,y)$. 
But all tree nodes of~$N$ are of outdegree-$2$; this implies that every leaf can be the second coordinate of at most one reducible pair.
Therefore such a network cannot exist, and thus this case is not possible.
\end{proof}

\begin{lemma}[Trivial pair reduction]\label{lem:trivialPairs}
Let $\mathcal{N}=\{N_1,\ldots,N_I\}$ be a set of tree-child networks on a common set of leaves such that there exists a TCS $SS'$ for $\mathcal{N}$. 
Suppose $x,y$ are leaves such that for each $N\in\mathcal{N}S$ we have either $x$ not in $N$, or $(x,y)$ a reducible pair of $N$,
and there is at least one network such that the latter holds. 
Then there exists a TCS $S(x,y)S''$ of length equal to $SS'$ that reduces $\mathcal{N}$, or if $y$ is forbidden after $S$ and there is a sequence of the form $S(y,x)S'''$ of the same length as $SS'$ that reduces $\mathcal{N}$.
\end{lemma}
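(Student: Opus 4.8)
The plan is to push the pair $(x,y)$ (or, when that is impossible, $(y,x)$) leftwards through $S'$ until it sits immediately after $S$, using Lemma~\ref{lem:Move1Left} as the elementary step. Fix $N_0\in\mathcal{N}$ for which $(x,y)$ is reducible in $N_0S$ (one exists by hypothesis). Since $SS'$ reduces $N_0$, the suffix $S'$ reduces $N_0S$ to a single leaf, and since $N_0S$ still contains both $x$ and $y$, at least one of them is deleted along the way. First I would show that $S'$ contains \emph{exactly one} of $(x,y)$ and $(y,x)$. It contains at least one: if $(x,y)$ is a reticulated cherry of $N_0S$ then Lemma~\ref{lem:reducingReticCherries} forces $(x,y)\in S'$; otherwise $\{x,y\}$ is a cherry of $N_0S$, and since reductions never create reticulation edges this cherry persists under $S'$ until one of $x,y$ is deleted, which only $(x,y)$ or $(y,x)$ can do. It contains at most one: no TCS can contain both, since the first‑occurring of the two has a first coordinate that reappears as a second coordinate later, contradicting the second defining property of a TCS.

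Next I would establish a \emph{propagation} statement: if the hypothesis of the lemma holds for $\mathcal{N}S$, and $P$ is any prefix of $S'$ containing neither $(x,y)$ nor $(y,x)$, then the same hypothesis holds for $\mathcal{N}SP$. For a network not containing $x$ this is clear, so let $M$ contain $x$ with $(x,y)$ reducible. A short case analysis on the shape of a reduction shows that a pair not of the form $(x,\cdot)$ can neither be reducible in $M$ (when it has $x$ or $y$ as a coordinate other than the first coordinate of an $(x,\cdot)$) nor disturb the parent of $x$ or its incident edges (when it involves neither $x$ nor $y$); and a pair $(x,c)$ with $c\neq y$ either does nothing to $M$ or deletes one surplus edge into the parent of $x$, necessarily not the edge coming from the parent of $y$, at worst turning the reticulated cherry $(x,y)$ into the cherry $\{x,y\}$. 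Hence $(x,y)$ stays reducible throughout $P$, and applying this to $N_0$ keeps the ``at least one'' clause alive.

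Now write $S'=P\,q\,Q$ with $q\in\{(x,y),(y,x)\}$ the first occurrence of either pair in $S'$, so that $P$ contains neither. I would prove by induction on $|P|$ that there is a TCS of the form $S\,q\,R$ of length $|SS'|$ that reduces $\mathcal{N}$: for $|P|=0$ this is $SS'$ itself; for $|P|\geq 1$ write $P=P'(a,b)$, note that the hypothesis of the lemma holds for $\mathcal{N}SP'$ by the propagation statement, apply Lemma~\ref{lem:Move1Left} to the TCS $SP'(a,b)\,q\,Q$ to get a TCS of equal length starting with $SP'\,q$, observe that $q$ is still the first occurrence of $(x,y)$‑or‑$(y,x)$ in this new sequence and is now preceded by only $|P|-1$ pairs, and apply the induction hypothesis. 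If $q=(x,y)$, the resulting $S(x,y)R$ is the first alternative of the lemma and we are done. If $q=(y,x)$ and $y$ is forbidden after $S$, then $(x,y)$ cannot lawfully follow $S$ in any TCS, so the lemma must be delivered via its second alternative, and $S(y,x)R$ does that. If $q=(y,x)$ while $y$ is not forbidden after $S$, then every network of $\mathcal{N}S$ containing $x$ has $\{x,y\}$ as a cherry (again by Lemma~\ref{lem:reducingReticCherries}, which otherwise would place $(x,y)$ into $S'$), and $x$ is unforbidden after $S$ as well (it occurs as a second coordinate in $S'$); I would then convert $S(y,x)R$ into a TCS of the form $S(x,y)S''$ of the same length by reducing that cherry the other way and relabelling the surviving leaf accordingly.

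The step I expect to be the main obstacle is the treatment of the case $q=(y,x)$. On the one hand, sliding $(y,x)$ to the front needs a ``mirror'' of Lemma~\ref{lem:Move1Left} whose hypothesis is phrased on the coordinate $x$ rather than the first coordinate: one must re‑run the clash analysis of that lemma's proof with the roles of the two leaves exchanged, using that all networks containing $x$ carry $\{x,y\}$ as a cherry. On the other hand, the relabelling that converts $S(y,x)R$ into $S(x,y)S''$ is transparent for networks containing $x$ but must be checked not to spoil the reduction of networks that contain $y$ but not $x$, on which $(y,x)$, and the pairs of $R$ obtained from an $(x,\cdot)$ by relabelling, were or become no‑ops; ruling out interference there is where the real care lies. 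The propagation statement, though routine, also requires a patient enumeration of how a reduction performed elsewhere in a network can and cannot interact with the parents of $x$ and $y$.
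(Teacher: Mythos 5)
Your proposal is correct and follows essentially the same route as the paper: locate the first occurrence of $(x,y)$ or $(y,x)$ in $S'$, verify that the hypothesis of Lemma~\ref{lem:Move1Left} propagates through the intervening pairs, slide that occurrence to the front by repeated application of Lemma~\ref{lem:Move1Left}, and in the $(y,x)$ case with $y$ not forbidden swap the roles of $x$ and $y$ (the paper performs this swap on $S'$ before sliding rather than after, and uses that $x$ is unforbidden after $S$ to conclude that \emph{every} network of $\mathcal{N}S$ contains $x$ and carries the cherry $\{x,y\}$, which makes the relabelling concerns you flag essentially vacuous). The extra care you devote to the propagation step and to the ``exactly one of $(x,y),(y,x)$'' claim is detail the paper asserts without proof, not a divergence in method.
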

\begin{proof}
To reduce a network with reducible pair $(x,y)$, the sequence $S'$ must contain either $(x,y)$ or $(y,x)$. Let $S'_i$ be the first occurrence of such a pair. 

First suppose $S'_{i}=(x,y)$. Then for each intermediate set of networks $\mathcal{N}SS'_{[:j]}$ for $j<i$ we have that all the networks in the set either do not contain $x$, or have the reducible pair $(x,y)$. Hence, by repeated application of Lemma~\ref{lem:Move1Left}, there is a sequence $S(x,y)S''$ for $\mathcal{N}$. This sequence has the same length as $SS'$, because it is simply a reordering of the pairs.

Now suppose $S'_i=(y,x)$, then $x$ cannot have been the first coordinate in any pair of $S$, so all networks in $\mathcal{N}S$ contain $x$. Furthermore, $S'$ does not contain the pair $(x,y)$, as this would violate the assumption that $SS'$ is a TCS. Hence, each network in $\mathcal{N}S$ has a cherry or reticulated cherry on $x$ and $y$, which is ultimately reduced by a pair $(y,x)$ in $S'$. Suppose a network $N\in\mathcal{N}S$ does not have the cherry $\{x,y\}$. Then it has the reticulated cherry $(x,y)$. To make this into a cherry, so that it can be reduced by $(y,x)$, the sequence must first contain a pair of the form $(x,z)$. However, this implies $S'$ first contains $(x,z)$ and then $(y,x)$, which contradicts the fact that $SS'$ is a TCS. Hence, we may assume that all networks in $\mathcal{N}S$ have the cherry $\{x,y\}$.

If $y$ is not forbidden after $S$, we can switch the roles of $x$ and $y$ in the remaining part of the sequence $S'$ to get a new TCS $SS^*$ for $\mathcal{N}$. In $S^*$, the first occurrence of $(x,y)$ or $(y,x)$ is $S^*_{i}=(x,y)$, and we are in the previous case. If $y$ is forbidden after $S$, repeated application of Lemma~\ref{lem:Move1Left} on $SS'$ and $S'_i$ gives a sequence $S(y,x)S'''$ for $\mathcal{N}$.
\end{proof}

\subsubsection{Bounding reducible pairs in networks with all leaves}
In the algorithm in \cite{van2019practical}, a bound on the number of cherries after having reduced all trivial cherries was required to compute the running time. 
Here, we require something similar; we require a bound on the number of reducible pairs after we have reduced all the trivial pairs. \cite{van2019practical} prove such bounds by first focusing on the case where all input trees have the same leaf set. We do the same, by first focusing on the case where all input networks have the same leaf set. 

Let~$\cal{N}$ be a set of networks.
Then the \emph{set of displayed trees of}~$\cal{N}$ is the set of all trees that are displayed by the networks of~$\cal{N}$.

\begin{lemma}\label{lem:TrivPair}
Let $\mathcal{N}=\{N_1,\ldots,N_I\}$ be a set of tree-child networks on a common set of leaves such that there exists a TCS $S$ for $\mathcal{N}$. If $\mathcal{N}$ does not contain any trivial pairs, then the set of displayed trees of $\mathcal{N}$ has no trivial cherries.
\end{lemma}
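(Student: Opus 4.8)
The plan is to argue by contraposition: assuming the set of displayed trees of $\mathcal{N}$ has a trivial cherry $\{x,y\}$, I would show that $(x,y)$ or $(y,x)$ is a trivial pair of $\mathcal{N}$. So suppose $\{x,y\}$ is a cherry in every displayed tree of $\mathcal{N}$ that contains both $x$ and $y$. First I would observe that for any network $N\in\mathcal{N}$ with both leaves $x,y$, the fact that $\{x,y\}$ is a cherry in every tree displayed by $N$ forces $x$ and $y$ to be ``locally close'' in $N$: if they were not, one could exhibit a displayed tree in which their lowest common ancestor has some other leaf descending strictly between them. The key structural claim is that this local closeness, for a tree-child network, means precisely that $(x,y)$ or $(y,x)$ is a reducible pair of $N$ — i.e. either $\{x,y\}$ is a cherry, or one of $x,y$ sits below a reticulation whose sibling edge leads to the other.

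The main work is to establish that claim. I would take $N\in\mathcal{N}$ containing $x$ and $y$ and let $p_x,p_y$ be their parents. Using the tree-child property, every node has at least one child path down to a leaf avoiding reticulations, so displayed trees are obtained by, at each reticulation, keeping exactly one in-edge. I would show: if $\{x,y\}$ is not a cherry and neither $(x,y)$ nor $(y,x)$ is a reticulated cherry, then I can choose in-edges at the reticulations so that the resulting displayed tree $T$ places some third leaf on the path between (the images of) $x$ and $y$, contradicting that $\{x,y\}$ is a cherry of $T$. Concretely: if $p_x$ is a tree node, its other child has a tree-path to some leaf $z\neq x,y$, and $z$ survives in a suitable displayed tree while $x$ still has $p_x$ as parent, so $\{x,y\}$ is not a cherry there; if $p_x$ is a reticulation but $p_y$ is not a parent of $p_x$, a similar argument applies after fixing the in-edge of $p_x$ to come from a parent $q$ whose other subtree again yields a separating leaf. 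The bookkeeping of which reticulation in-edges to retain, and checking that the chosen leaves genuinely separate $x$ from $y$ in the displayed tree, is the part that needs care.

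Once the claim holds for each $N$ individually, the conclusion is almost immediate. For every $N\in\mathcal{N}$ containing $x$ and $y$, we get that $(x,y)$ or $(y,x)$ is reducible in $N$. By Lemma~\ref{lem:reticCherryTCCompatible} it cannot happen that $(x,y)$ is a reticulated cherry in one network while $(y,x)$ is a reticulated cherry in another, since $\mathcal{N}$ admits a TCS and hence is tree-child compatible; and a cherry $\{x,y\}$ is symmetric. So, after possibly swapping the names $x,y$ globally, every $N\in\mathcal{N}$ either lacks $x$ or has $(x,y)$ as a reducible pair — and since the displayed tree of at least one network realizes the cherry, at least one network has the reducible pair. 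This is exactly the definition of $(x,y)$ being a trivial pair of $\mathcal{N}$, contradicting the hypothesis.

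The main obstacle I anticipate is the separating-leaf construction inside a tree-child network: I must be sure that when $\{x,y\}$ fails to be a reducible pair there is always a displayed tree witnessing that $\{x,y\}$ is not a cherry, handling uniformly the cases where $p_x$ and/or $p_y$ are tree nodes versus reticulations, and where $p_x,p_y$ may coincide up to one reticulation edge. The tree-child property (guaranteeing a reticulation-free descending path from every node) is what makes this go through, and I would lean on it repeatedly.
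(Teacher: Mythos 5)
The paper states this lemma without proof, so there is nothing to compare against directly; judged on its own, your strategy is sound and is the natural one: reduce to the per-network claim that if $\{x,y\}$ is a cherry in every tree displayed by a tree-child network $N$ then $(x,y)$ or $(y,x)$ is a reducible pair of $N$, and then use Lemma~\ref{lem:reticCherryTCCompatible} (equivalently, Lemma~\ref{lem:reducingReticCherries} applied to the TCS $S$) to rule out oppositely-oriented reticulated cherries in different networks, so that one global orientation makes $(x,y)$ a trivial pair. That last assembly step is exactly right and does need the TCS hypothesis, which you correctly invoke.

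The one place your sketch is wrong as written is the claim that, when $p_x$ is a tree node with other child $c$, ``$c$ has a tree-path to some leaf $z\neq x,y$'': the tree-path guaranteed by the tree-child property may well end at $y$ (e.g.\ $c$ is a tree node one of whose children is $y$), so no such $z$ is guaranteed along that path. The correct repair, which you partly anticipate in your final paragraph, argues about the whole clade below $c$ in a suitably chosen displayed tree rather than about a single tree-path: if $c$ is a leaf or a reticulation with a leaf child, that clade is a single leaf which by non-reducibility is not $y$; otherwise $c$ (or the tree-node child of the reticulation $c$) can be made to branch in the displayed tree by directing its reticulation children towards it, so the clade below $c$ contains at least two leaves and hence cannot equal $\{y\}$; in either case $p_x$ still branches and $\{x,y\}$ fails to be a cherry. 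The same repair handles the case where $p_x$ is a reticulation: any parent $q$ of $p_x$ has a non-reticulation other child by the tree-child property, so $q$ branches in the displayed tree selecting $qp_x$, and the clade below that other child is either a single leaf distinct from $y$ (else $(x,y)$ would be a reticulated cherry) or has at least two leaves. With this adjustment your proof goes through.
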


\begin{lemma}[\cite{van2019practical} Lemma~10]
  \label{lem:4k-cherries}
  Let $\mathcal{T}$ be a set of phylogenetic trees with leaf set $X$ such that there is a tree-child network $N$ with $k$ reticulations that displays $\mathcal{T}$. If $\mathcal{T}$ has no trivial cherries, then the total number of cherries of the trees in $\mathcal{T}$ is at most $4k$.
\end{lemma}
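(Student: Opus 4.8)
The plan is to prove the bound by a charging argument that sends each cherry of $\mathcal{T}$ to a reticulation edge of $N$ and then shows that no edge receives more than two cherries. First I would make two reductions. Since displaying, tree-childness, and the reticulation number are all preserved when high-indegree reticulations are resolved into indegree-$2$ ones, I may assume $N$ is binary; then $N$ has exactly $k$ reticulations and $2k$ reticulation edges. Second, because all trees of $\mathcal{T}$ share the leaf set $X$, a cherry $\{x,y\}$ is trivial precisely when it is a cherry of \emph{every} tree of $\mathcal{T}$; by hypothesis there are none, so each counted cherry $\{x,y\}$ is a cherry of some $T\in\mathcal{T}$ and fails to be a cherry of some $T'\in\mathcal{T}$. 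I would then record that such a non-trivial cherry can never be a cherry of $N$ itself: if $x,y$ shared a tree-node parent $p$ in $N$, then $p$ would retain children $x,y$ in every displayed tree, making $\{x,y\}$ trivial.

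The engine of the argument is a \emph{canonical leaf} attached to every vertex via the tree-child property. Since every tree node and every reticulation has a child that is a tree node or a leaf, following such children from a vertex $w$ to a leaf yields a well-defined leaf $\ell(w)$; moreover the edges along this descending path are all tree edges, so $\ell(w)$ lies below $w$ in every displayed tree, and it is the unique leaf to which the pendant subtree below $w$ can collapse (any other reachable leaf is separated from $\ell(w)$ by a reticulation edge whose retention forces a second leaf below $w$). Consequently, if $\{x,y\}$ is realised in a displayed tree $T$ at the image $v$ of the cherry's parent, with the two edge-disjoint embedding paths leaving $v$ through its children $v_x,v_y$, then the subtrees below $v_x$ and $v_y$ each collapse to a single leaf in $T$, forcing $x=\ell(v_x)$ and $y=\ell(v_y)$. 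Thus each tree node $v$ of $N$ hosts at most one cherry pair, namely $\{\ell(v_x),\ell(v_y)\}$, and the distinct cherries of $\mathcal{T}$ inject into the set of hosting tree nodes.

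It remains to show that only $O(k)$ tree nodes host a non-trivial cherry, and to sharpen the constant to $4k$. Here I would exploit non-triviality: if $v$ hosts $\{x,y\}$ then some tree $T'$ either makes the subtree below $v_x$ or below $v_y$ fail to collapse to a single leaf, or deprives $v$ of one of these children. Tracing why a collapse can fail, the only mechanism available in a tree-child network is the retention, in $T'$, of a reticulation edge branching off the tree-path from $v_x$ (or $v_y$) to its canonical leaf, or the fact that $v_x$ (or $v_y$) is itself a reticulation. In every case I can name a reticulation edge $e(v)$ \emph{responsible} for the non-triviality of the cherry at $v$, charge $v$ to $e(v)$, and argue — using the uniqueness of canonical leaves — that the configuration $(e,\text{side})$ determines the hosting node, so that each of the $2k$ reticulation edges is charged by at most two nodes. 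Summing yields at most $4k$ cherries.

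I expect the last step to be the main obstacle: making the responsible edge $e(v)$ canonical and proving the constant multiplicity. The delicate point is that a single reticulation $r$ may simultaneously create a reticulated cherry (its child is the leaf $\ell$) and sit as the non-$x$ child of a tree-node parent, so $r$ can be implicated from more than one hosting node; bounding this needs the tree-child facts already used in this paper — that every leaf is the second coordinate of at most one reducible pair (as in the proof of Lemma~\ref{lem:Move1Left}) and that canonical leaves are determined pairwise by the two children of a node. Carefully accounting for the two in-edges of each reticulation together with the two sides $v_x,v_y$ is exactly what produces $4k$ rather than a larger constant, and this bookkeeping is where the argument in \cite{van2019practical} does its real work.
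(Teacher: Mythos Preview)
This lemma is not proved in the present paper; it is quoted from \cite{van2019practical} (their Lemma~10) and invoked as a black box in the proof of Lemma~\ref{lem:8k-reduciblePairs}. There is therefore no in-paper argument to compare your attempt against, only the cited source.

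Your outline does follow the same strategy as the original proof: locate each distinct cherry at a ``hosting'' tree node of $N$ via the embedding of the cherry's parent, use the tree-child property to pin down the two leaves as canonical tree-path descendants of that node's children, and then charge hosting nodes to reticulation edges. The injectivity step is sound, with the minor caveat that $\ell(w)$ is not globally well-defined (a tree node may have two non-reticulation children), though it \emph{is} uniquely determined for the children of any genuine hosting node, which is all you actually use.

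The proposal is, however, explicitly incomplete at the decisive step, and you say so yourself. The entire content of the constant $4k$ lies in showing that each reticulation edge is charged by at most two hosting nodes, and you do not establish this. In your Case~C (both children of the hosting node $w$ are tree nodes, and non-triviality is witnessed by a reticulation hanging off the canonical path from $w_x$ to $x$), a single reticulation edge $e=uv$ may have many tree-edge ancestors of $u$ along that canonical path, each of which is potentially the child $w_x$ of some hosting node charging to $e$; one must choose the ``responsible'' edge canonically and argue that the choice determines $w$ up to the claimed multiplicity. Without that, the argument yields only finiteness, not $4k$. Since this is precisely the bookkeeping you defer back to \cite{van2019practical}, what you have written is a plausible plan rather than a proof.
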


Lemmas~\ref{lem:TrivPair} and~\ref{lem:4k-cherries} gives the bound on the number of reducible pairs for networks with common leaf sets.

\begin{lemma}
  \label{lem:8k-reduciblePairs}
  Let $\mathcal{N}$ be a set of tree-child networks with leaf set $X$ such that there is a tree-child network $N$ with $k$ reticulations that displays $\mathcal{N}$. If $\mathcal{N}$ has no trivial pairs, then the total number of reducible pairs of the networks in $\mathcal{N}$ is at most $8k$.
\end{lemma}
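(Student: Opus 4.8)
The plan is to count reducible pairs by relating them to cherries in the displayed trees, and then to apply the two lemmas just stated. Write $\mathcal{T}$ for the set of displayed trees of $\mathcal{N}$. First, since $\mathcal{N}$ has no trivial pairs and admits a TCS (one exists because $N$ is tree-child and displays $\mathcal{N}$, using Lemma~\ref{lem:ContainmentImpliesReductionTC}), Lemma~\ref{lem:TrivPair} shows that $\mathcal{T}$ has no trivial cherries. Second, since $N$ displays $\mathcal{N}$ and the display relation is transitive, $N$ displays every tree that is displayed by a network of $\mathcal{N}$; that is, $N$ displays $\mathcal{T}$. Hence Lemma~\ref{lem:4k-cherries} applies to $\mathcal{T}$, and we conclude that at most $4k$ distinct unordered pairs of leaves occur as a cherry in some tree of $\mathcal{T}$.

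It then suffices to charge the reducible pairs of the networks in $\mathcal{N}$ to these cherry pairs while losing only a factor of two. The core claim is: if $(x,y)$ is a reducible pair of some $N_i\in\mathcal{N}$, then $\{x,y\}$ is a cherry in some tree of $\mathcal{T}$. If $\{x,y\}$ is a cherry of $N_i$ this is immediate, since $\{x,y\}$ is then a cherry of every tree displayed by $N_i$. If $(x,y)$ is a reticulated cherry of $N_i$, let $p_x$ be its reticulation and let $p_y$ be the parent of $y$, which is also a parent of $p_x$; one checks that $p_y$ must be a tree node (its child $y$ is a leaf and it has a second child $p_x$, so it is neither a leaf nor the root). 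Resolving $p_x$ by retaining the reticulation edge $p_yp_x$ and resolving every other reticulation of $N_i$ arbitrarily yields a tree on $X$ displayed by $N_i$: in this tree $p_x$ becomes a degree-two vertex and is suppressed, so $x$ and $y$ become a cherry below $p_y$.

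To finish, fix an unordered pair $\{x,y\}$ that is a cherry in some tree of $\mathcal{T}$. The only ordered pairs on leaves $x$ and $y$ are $(x,y)$ and $(y,x)$, so at most two reducible pairs of networks in $\mathcal{N}$ are charged to $\{x,y\}$. Therefore the total number of (distinct) reducible pairs among the networks of $\mathcal{N}$ is at most $2\cdot 4k=8k$.

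The only step that needs genuine care is the reticulated-cherry case of the core claim — verifying that keeping the edge $p_yp_x$ and suppressing really produces a valid tree on $X$ displayed by $N_i$ in which $\{x,y\}$ is a cherry (in particular, that $p_y$ survives with children $x$ and $y$, and that no other reticulation resolution interferes with the local structure around $p_x$, $p_y$, $x$, $y$). The transitivity of the display relation is also worth stating explicitly, since Lemma~\ref{lem:4k-cherries} is phrased for a set of trees displayed directly by $N$; everything else is bookkeeping.
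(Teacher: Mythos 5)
Your proof is correct and follows essentially the same route as the paper's: reduce to counting cherries in the set of displayed trees, invoke Lemma~\ref{lem:TrivPair} to verify the no-trivial-cherries hypothesis, and apply Lemma~\ref{lem:4k-cherries}. You merely make explicit several steps the paper leaves implicit (the reticulated-cherry-to-cherry construction, transitivity of display, and the factor of two from ordered versus unordered pairs), all of which check out.
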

\begin{proof}
Each reducible pair of a network is a cherry in one of its displayed trees, and the set of displayed trees is displayed by the solution network $N$ as well. Hence, by Lemma~\ref{lem:4k-cherries}, there are at most $8k$ reducible pairs in the trees, and therefore at most $8k$ reducible pairs in the networks.
\end{proof}

\subsubsection{Bounding reducible pairs in general}

Recall that the algorithm will build a TCS by successively appending reducible pairs; it terminates upon finding the shortest possible sequence that reduces all the input networks.
In the process, it branches on all possible non-trivial pairs that the input network may have.
Depending on the sequence that is being built, it is possible that leaves that exist in some of the input networks (after reduction by the existing sequence) may have already been deleted from others.
Here, we show that even for these instances, it is still the case that the number of possible reducible pairs that we can branch on is bounded by~$8k$.
This result follows directly from Lemma~7 of \cite{van2019practical}: we change the wording of the statement slightly to accommodate for network inputs.

\begin{lemma}\label{lem:maintain-8k-pairs}
 Let $\mathcal{N}$ be a set of tree-child networks on $X$, and let $S = (x_1, y_1), (x_2, y_2), \ldots, (x_r, y_r) $be a TCS for $\mathcal{N}$ with weight $k$.
 For any $j \in [r] \cup \{0\}$, either there exists a trivial pair of $\mathcal{N}S_{[:j]}$, or $\mathcal{N}S_{[:j]}$ has at most $8k$ reducible pairs.
\end{lemma}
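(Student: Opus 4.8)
The plan is to reduce this general statement to the common-leaf-set case already handled by Lemma~\ref{lem:8k-reduciblePairs}, by padding the partially-reduced networks with leaves so that they again share a leaf set, and then arguing that padding does not destroy the existence of a witnessing tree-child network of weight $k$ nor decrease the number of reducible pairs. Fix $j\in[r]\cup\{0\}$ and set $\mathcal{M}:=\mathcal{N}S_{[:j]}$. If $\mathcal{M}$ has a trivial pair we are done, so assume it does not. Each $N\in\mathcal{M}$ is a tree-child network on some subset $X_N\subseteq X$ (the leaves of $X$ not yet used as first coordinates in $S_{[:j]}$ that still survive in $N$, together with whichever leaves remain). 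Since $S_{[j+1:]}$ is a TCS that reduces $\mathcal{M}$, the suffix $S_{[j+1:]}$ itself is a partial TCS whose associated tree-child network $N^{*}$ has reticulation number equal to $w(S_{[j+1:]})\le w(S)=k$ and, by Lemma~\ref{lem:reductionImpliesContainment}, displays every $N\in\mathcal{M}$; this is the witnessing network we will use, though it may be on fewer than $|X|$ leaves.

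First I would make the leaf sets uniform. Let $Y$ be the common leaf set of all networks in $\mathcal{M}$ and of $N^{*}$ — more precisely, $Y$ should be the leaf set of $N^{*}$, which by the display relation contains $X_N$ for every $N\in\mathcal{M}$; so I only need to enlarge the $N\in\mathcal{M}$, not $N^{*}$. For each $N\in\mathcal{M}$ and each leaf $z\in Y\setminus X_N$, I add $z$ back into $N$ in the position dictated by the embedding of $N$ into $N^{*}$: formally, take the embedding of (the subnetwork of $N^{*}$ isomorphic to) $N$ into $N^{*}$, and for the leaf $z$ of $N^{*}$, subdivide the appropriate edge of $N$ and attach a pendant edge to a new leaf $z$, mirroring how $z$ hangs in $N^{*}$. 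Call the result $\widehat N$. The key facts to check are: (i) $\widehat N$ is still tree-child — attaching a pendant leaf by subdividing a tree edge preserves the tree-child property, and one can always choose a tree edge to subdivide because in a tree-child network every vertex has a tree-node-or-leaf child, so a suitable edge is available; (ii) $N^{*}$ displays $\widehat N$ — the enlarged embedding (original embedding plus the single path realizing $z$) is still edge-disjoint, since $z$ is a leaf of $N^{*}$ and its incoming edge was not used by the original embedding; and (iii) every reducible pair of $N$ is still a reducible pair of $\widehat N$, because the local structure at an existing cherry or reticulated cherry of $N$ is untouched when we subdivide an unrelated edge and hang a new leaf. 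Doing this for all $N$ and all missing leaves produces $\widehat{\mathcal{M}}$, a set of tree-child networks all on the leaf set $Y$, displayed by the tree-child network $N^{*}$ which has at most $k$ reticulations.

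Next I would apply Lemma~\ref{lem:8k-reduciblePairs} to $\widehat{\mathcal{M}}$ — but that lemma needs $\widehat{\mathcal{M}}$ to have no trivial pairs. Here is the one genuinely delicate point, and I expect it to be the main obstacle: padding could in principle \emph{create} a trivial pair in $\widehat{\mathcal{M}}$ that was not trivial (indeed not even present) in $\mathcal{M}$, namely a pair $(x,y)$ that is a reducible pair in every $\widehat N$ containing $x$ but becomes reducible only because of the freshly added leaves. To rule this out I would choose the padding positions more carefully: add each missing leaf $z$ not merely "somewhere valid" but at a \emph{generic} location (e.g. subdividing a tree edge whose endpoints are far from any leaf, always introducing $z$ as a non-cherry leaf, which is possible whenever the subnetwork has at least two leaves — and the degenerate one-leaf case can be handled directly). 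With $z$ never forming a cherry and never sitting above a reticulation in $\widehat N$, the set of reducible pairs of $\widehat N$ equals that of $N$ exactly, so $\widehat{\mathcal{M}}$ has a trivial pair iff $\mathcal{M}$ does; since $\mathcal{M}$ has none, neither does $\widehat{\mathcal{M}}$. Lemma~\ref{lem:8k-reduciblePairs} then gives that $\widehat{\mathcal{M}}$ has at most $8k$ reducible pairs in total, and since the reducible pairs of $\mathcal{M}$ are a subset (under the obvious correspondence) of those of $\widehat{\mathcal{M}}$, the set $\mathcal{M}=\mathcal{N}S_{[:j]}$ has at most $8k$ reducible pairs, completing the proof.

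(One caveat to address in the write-up: the alternative route, which may actually be cleaner, is to bypass padding entirely by working with displayed trees directly — each reducible pair of $\mathcal{N}S_{[:j]}$ is a cherry in one of its displayed trees, those trees are all displayed by $N^{*}$, and the tree set has no trivial cherry by an analogue of Lemma~\ref{lem:TrivPair}; then Lemma~\ref{lem:4k-cherries} applied with the $2$-leaf subsets as in the proof of Lemma~\ref{lem:8k-reduciblePairs} yields the $8k$ bound. This mirrors the cited Lemma~7 of \cite{van2019practical} and avoids the trivial-pair-creation subtlety; I would present whichever of the two makes the bookkeeping shortest.)
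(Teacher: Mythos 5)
Your high-level plan---pad the networks of $\mathcal{N}S_{[:j]}$ back up to a common leaf set and then invoke Lemma~\ref{lem:8k-reduciblePairs}---is the same as the paper's, but the execution has a genuine gap. The padding must simultaneously satisfy two conditions: (a) the padded set is displayed by a tree-child network with at most $k$ reticulations, and (b) the padded set has no trivial pairs. Your embedding-dictated placement gives (a) but, as you yourself flag, can create new reducible pairs and hence new trivial pairs, violating (b). Your proposed fix---placing each missing leaf at a ``generic'' tree edge so that it forms no cherry---restores (b) but destroys (a): once $z$ is no longer attached where the embedding of $N$ into $N^{*}$ dictates, $N^{*}$ need not display $\widehat{N}$ any more, and you never construct a replacement witness. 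The two constructions are incompatible, and no single padding satisfying both is exhibited. (Your fallback in the final paragraph inherits the same defect: Lemma~\ref{lem:4k-cherries} also requires the trees to share the leaf set $X$, which the displayed trees of $\mathcal{N}S_{[:j]}$ do not; coping with unequal leaf sets is precisely the content of this lemma.)

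The paper resolves the tension by decoupling the witness from $N^{*}$: every missing leaf is reattached on the root edge of each network, stacked in the order in which these leaves appeared as first coordinates of $S_{[:j]}$. This placement creates no new reducible pairs among the taxa (so (b) holds), and (a) is certified not by an embedding argument but by writing down an explicit TCS of weight $k$ that reduces the padded set, namely $(x_{j+1},y_{j+1}),\ldots,(x_r,y_r)$ followed by $(x_1,y_r),\ldots,(x_j,y_r)$; the tree-child network constructed from this sequence is the required witness. Replacing your embedding/generic placement by this root placement together with the explicit sequence is the missing idea.
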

 The idea of the proof is as follows. Let $j$ be such that $\mathcal{N}S_{[:j]}$ has no trivial pairs. Then we find a set of tree-child networks $\hat{\mathcal{N}}_j$ on $X$ with the same set of reducible pairs as $\mathcal{N}S_{[:j]}$ and tree-child hybridization number at most $k$. By Lemma~\ref{lem:8k-reduciblePairs}, this shows that $\mathcal{N}S_{[:j]}$ has at most $8k$ reducible pairs. 
 
 The set of networks is constructed by adding back each missing leaf to each network in $\mathcal{N}S_{[:j]}$ at the root. The order in which they are placed at the root is the same as the order in which these leaves appear as first element in $S_{[:j]}$. 
 Now, we may construct a TCS of the same weight as~$S$ that reduces this set of networks. 
 By first reducing the part that corresponds to the part in $\mathcal{N}S_{[:j]}$, and then the leaves placed by the root, we have a TCS that reduces~$\hat{\mathcal{N}}_j$ of weight at most $k$:
 $$(x_{j+1}, y_{j+1}), (x_{j+2}, y_{j+2}), \ldots, (x_r, y_r),  (x_1, y_r), (x_2, y_r), \dots, (x_j, y_r).$$
 An example of the corresponding networks and their embeddings can be found in Figure~\ref{fig:Maintain-8k-pairs}.
 
 \begin{figure}[h]
    \centering
    \includegraphics[width=\textwidth]{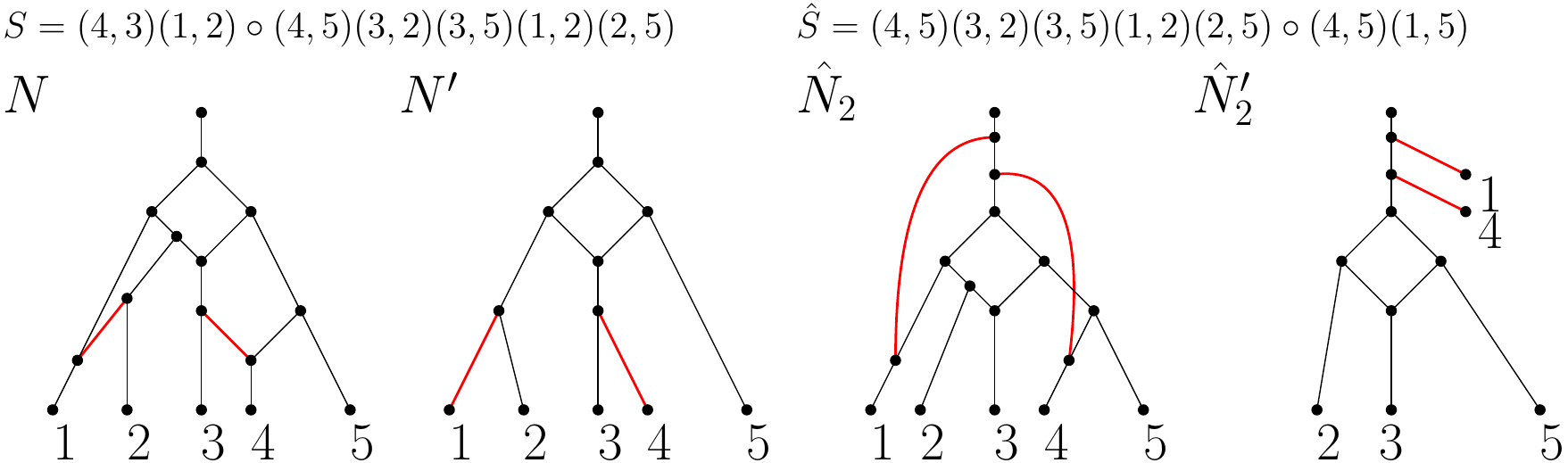}
    \caption{A set~$\mathcal{N} = \{N,N'\}$ of tree-child networks on the set of taxa~$\{1,2,3,4,5\}$, with a TCS~$S$ that reduces it.
    A set~$\hat{\mathcal{N}_2} = \{\hat{N}_2, \hat{N}'_2\}$ of tree-child networks obtained by reducing the first two elements of~$S$ from~$\mathcal{N}$, and reattaching the tail of the deleted edges (red edge) to the root edge, in the order that they were deleted in (as explained in the sketch proof of Lemma~\ref{lem:maintain-8k-pairs}).
    The sequence~$\hat{S}$ is a TCS of the same weight as~$S$, obtained from~$S$ by deleting the first two elements and appending these two elements to the end of the sequence, for which we replace the second coordinate of the elements by~$5$ (the leaf that appears as the second coordinate element in the last element of~$S$.
    }
    \label{fig:Maintain-8k-pairs}
\end{figure}

\begin{procedure}[h]
  \caption{TreeChildSequence($\mathcal{N},S,k$)\label{alg:tree-child-sequence}}
  \KwIn{A collection $\mathcal{N}$ of tree-child networks, a partial TCS $S$, an integer $k$\;}
  \KwOut{An optimal TCS $S S'$ of weight at most $k$ for $\mathcal{N}$ if such
    a sequence exists; $\Fail$ otherwise\;} 
%
 
 \While{There exists a trivial pair $(x,y)$ in $\mathcal{N}S$ with $y$ not forbidden by $S$\label{lin:LoopTrivialCherryReductionStart}}
 { 
    Set $S = S(x,y)$\;
 } \label{lin:LoopTrivialCherryReductionEnd}
 Set $\mathcal{N}' = \mathcal{N}S$\;
 \If{some network in $\mathcal{N}'$ has a cherry $(x,y)$ with $x,y$ forbidden or a reticulated cherry $(x,y)$ with $y$ forbidden}
 {
   \Return {$\Fail$\;}   \label{lin:FailForbiddenCherry}
 } 
 \Else 
 {
   Set $n' = |\{x \in X: x$ is a leaf in $\mathcal{N}'\}|$\; \label{lin:DefineNumberRemainingLeaves}
   Set $k' = |S| - |X| + n'$\;
   \label{lin:DefineCurrentWeight}
   Set $C =  \{(x,y) \mid (x,y) \text{ is a reducible pair of some network in } \mathcal{N}'\}$\;\label{lin:DefineCherrySet}
   \uIf{$|C| == 0$}
   {
       \Return $S$\;\label{lin:SolutionFound}
   }
   \uElseIf{$|C| > 8k$ or $k' \geq k$}
   {
       \Return{$\Fail$\;}\label{lin:FailTooManyCherriesOrReticulations}
   }
   \Else 
   {
       Set $S_{opt} = \Fail$\;\label{lin:RecursiveSolutionStart}
        \ForEach{$(x,y) \in C$ with $y$ not forbidden by $S$}{    
          Set $S_{temp} = \TreeChildSequence(\mathcal{N}, S(x,y), k)$\;
          \If{$S_{temp} \neq \Fail$ and ($S_{opt} = \Fail$  or ($S_{opt} \neq \Fail$ and $w(S_{temp}) < w(S_{opt})$))}
          {
            Set $S_{opt} = S_{temp}$\;
          }
        }
        \Return{$S_{opt}$\;}\label{lin:RecursiveSolutionEnd}
     }
   }
 
\end{procedure}

\subsection{Adapting the algorithm}\label{sec:Algorithms}
Our algorithms are exactly the same as those presented in \cite{van2019practical}, except for the following changes.

\begin{itemize}
    \item The input set of trees $\mathcal{T}$ is changed into an input set of tree-child networks $\mathcal{N}$;
    \item trivial cherries are now trivial pairs;
    \item In line~4, the stop condition of a non-pickable reticulated cherry is added;
\end{itemize}

The first change is necessary for the algorithm to take an input consisting of networks. The second change is necessary as not all reducible pairs are cherries anymore, when the input consists of networks. The while-loop that reduces all the trivial pairs is still correct in the algorithm, because there is an optimal sequence that first reduces all trivial pairs (Lemma~\ref{lem:trivialPairs}). The last change makes sure we stop when the reduced input $\mathcal{N}S$ cannot be fully reduced using a TCS that can be appended after the prefix $S$. 

Otherwise, the algorithm is still correct. Indeed, the algorithm branches over all non-trivial pairs, to find a shortest sequence that reduces all input networks; and this shortest sequence corresponds to a network with minimal reticulation number that displays all input networks. Furthermore, the running time follows as each branch-out is over at most $8k$ pairs, and the search depth is at most $k$.

\begin{theorem}
Let $\mathcal{N}$ be a set of tree-child networks on a set of taxa $X$. If there exists a tree-child network with at most $k$ reticulations that displays $\mathcal{N}$, then it can be found in $O((8k)^k\cdot\mathrm{poly}(|X|,|\mathcal{N}|))$ time using $\textsc{TreeChildNetwork}(\mathcal{N},k)$.
\end{theorem}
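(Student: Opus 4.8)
The plan is to assemble the theorem from the three ingredients already laid out in the excerpt: correctness of the trivial-pair reduction, correctness of the branching, and the $8k$ bound on the branching width. First I would phrase the algorithm $\textsc{TreeChildNetwork}(\mathcal{N},k)$ as a single call to $\textsc{TreeChildSequence}(\mathcal{N},\epsilon,k)$ with the empty partial TCS, and observe that by the equivalence established earlier (a TCS of weight $k$ reducing $\mathcal{N}$ corresponds to a tree-child network with reticulation number $k$ displaying $\mathcal{N}$, via Lemmas~\ref{lem:reductionImpliesContainment} and~\ref{lem:ContainmentImpliesReductionTC}), it suffices to show that $\textsc{TreeChildSequence}$ returns a minimum-weight TCS $SS'$ with $w(SS')\le k$ whenever one exists, and $\Fail$ otherwise, within the stated time bound.

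For \emph{correctness}, I would argue by induction on the remaining search depth $k - k'$, where $k'$ is the current weight of $S$ computed in Line~\ref{lin:DefineCurrentWeight}. The while-loop (Lines~\ref{lin:LoopTrivialCherryReductionStart}--\ref{lin:LoopTrivialCherryReductionEnd}) is justified by Lemma~\ref{lem:trivialPairs}: if some optimal TCS extending $S$ exists, then one exists that next picks the trivial pair $(x,y)$ (or, in the forbidden-$y$ case, $(y,x)$, which the algorithm will find through a different branch — here I would need a small remark that the forbidden case is actually handled because $(y,x)$ would be encountered as a non-trivial pair, or alternatively that the $\Fail$ on Line~\ref{lin:FailForbiddenCherry} correctly rejects). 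The $\Fail$ on Line~\ref{lin:FailForbiddenCherry} is justified by Lemma~\ref{lem:reducingReticCherries} (a reticulated cherry $(x,y)$ forces the pair $(x,y)$, impossible if $y$ is forbidden) together with the cherry analogue. The base/success case (Line~\ref{lin:SolutionFound}) is when $C=\emptyset$: then every network in $\mathcal{N}'$ is a single leaf, and since the second-coordinate condition has been maintained they are all the \emph{same} leaf, so $S$ reduces $\mathcal{N}$; minimality follows because the trivial-pair reductions never increase weight and every branch is explored. The recursive case: Line~\ref{lin:FailTooManyCherriesOrReticulations} prunes when $k'\ge k$ (no room left) or $|C|>8k$; the latter is safe precisely by Lemma~\ref{lem:maintain-8k-pairs}, which guarantees that along any TCS of weight $\le k$ the reduced set after a trivial-pair-free prefix has at most $8k$ reducible pairs, so exceeding this means $S$ cannot be extended to such a TCS. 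Finally, branching over all non-trivial reducible pairs $(x,y)\in C$ with $y$ not forbidden is exhaustive: by Lemma~\ref{lem:trivialPairs} again (the no-trivial-pair case), any optimal extension of $S$ picks as its next pair some reducible pair of $\mathcal{N}'$, and picking a pair never needs $y$ forbidden.

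For the \emph{running time}, I would bound the recursion tree: each call either terminates in polynomial time (trivial-pair loop, the $\Fail$/success checks, all computable in $\mathrm{poly}(|X|,|\mathcal{N}|)$ by scanning each network for reducible pairs), or branches into $|C|\le 8k$ children, each of which has strictly larger $k'$ — so the recursion depth is at most $k$ (each non-trivial pick increases $|S|$ while $n'$ either stays or drops, and $k' = |S|-|X|+n'$ increases by at least the growth needed to reach weight $k$; more carefully, a reticulated-cherry pick leaves $n'$ unchanged and increases $|S|$, raising $k'$, while a cherry pick drops $n'$ by one and raises $|S|$ by one, leaving $k'$ unchanged — so I must count depth by number of weight-increasing picks, of which there are at most $k$, interspersed with at most $|X|$ cherry picks per level, giving depth $O(k + |X|)$; I'd fold the $|X|$ factor into the polynomial and keep $(8k)^k$ as the branching contribution). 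Multiplying the branching factor $(8k)^k$ by the per-node polynomial cost yields $O((8k)^k\cdot\mathrm{poly}(|X|,|\mathcal{N}|))$.

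The main obstacle I anticipate is not any single deep lemma — all the heavy lifting is in Lemmas~\ref{lem:trivialPairs} and~\ref{lem:maintain-8k-pairs} — but rather the bookkeeping around \emph{forbidden leaves and the interaction of cherry-picks with the weight parameter}. Specifically: (i) making precise why the depth of the recursion is genuinely controlled by $k$ rather than $|X|$ (cherry reductions do not raise $k'$), so that the exponential factor is $(8k)^k$ and not $(8k)^{k+|X|}$ — this needs the observation that consecutive cherry-picks without an intervening reticulation-pick reduce the common leaf count and cannot loop, bounding them by $|X|$ per "weight level"; and (ii) confirming that the $y$-forbidden subtleties in Lemma~\ref{lem:trivialPairs} (the $S(y,x)S'''$ alternative) do not break exhaustiveness of the branching — I would handle this by noting that when $y$ is forbidden and only the $(y,x)$ option survives, $(y,x)$ is itself a reducible pair of $\mathcal{N}'$ with first coordinate $y$ not forbidden, hence lies in $C$ and is branched on, so no optimal sequence is missed.
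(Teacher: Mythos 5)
Your overall strategy matches the paper's: reduce the theorem to the correctness of \textsc{TreeChildSequence} via the TCS/tree-child-network correspondence, justify the trivial-pair loop by Lemma~\ref{lem:trivialPairs}, justify the pruning by Lemma~\ref{lem:maintain-8k-pairs}, and bound the work by an $8k$-way branching of depth $k$. The paper's own argument is only a short paragraph saying exactly this, so in structure you are aligned with it, and your resolution of point (ii) (the forbidden-$y$ alternative $S(y,x)S'''$ is still reachable, either as a trivial pair with second coordinate $x$ not forbidden or as an element of $C$) is sound.

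There is, however, one concrete flaw in your running-time analysis. Your ``more careful'' depth argument claims that a cherry pick drops $n'$ by one and leaves $k'$ unchanged, leading you to a recursion depth of $O(k+|X|)$, which you then propose to ``fold into the polynomial.'' That repair does not work: $|X|$ extra levels of recursion each with branching factor up to $8k$ would contribute a factor $(8k)^{|X|}$, which is not polynomial in $|X|$. Fortunately the premise is false: the algorithm only \emph{branches} on non-trivial pairs, and a pair $(x,y)$ is non-trivial precisely when some network in $\mathcal{N}'$ still contains $x$ without $(x,y)$ being reducible in it. Reducing such a pair therefore never deletes $x$ from \emph{every} network, so $n'$ does not decrease while $|S|$ increases by one, and $k'=|S|-|X|+n'$ strictly increases at every branching step. (Cherry picks that do remove $x$ from all networks are exactly the trivial ones, and those are consumed inside the while-loop of a single call, not by recursion.) Hence the recursion depth is genuinely at most $k$, your first instinct was the right one, and the bound $O((8k)^k\cdot\mathrm{poly}(|X|,|\mathcal{N}|))$ follows without any folding. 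With that correction your proof is essentially the paper's argument, written out in more detail than the paper itself provides.
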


\begin{procedure}
  \caption{TreeChildNetwork($\mathcal{N},k$)\label{alg:tree-child-network}}
  \KwIn{A collection $\mathcal{N}$ of tree-child networks, an integer $k$\;}
  \KwOut{ A tree-child phylogenetic network $N$ on $X$ that displays $\mathcal{N}$ with reticulation number at most $k$, if such a network exists; otherwise {\sc None}\;}
  
  Set $S =  \TreeChildSequence(\mathcal{N}, \emptyset, k)$\;
  \If{$S == \Fail$}{
    \Return {$\None$\;}
  }\Else{
    Set $N = \ConstructNetworkFromSequence(S)$\;
    \Return $N$\;
  } 
\end{procedure}

\section{Discussion}\label{sec:Discussion}
In this paper, we have introduced {\sc Network Hybridization}, the problem of finding a network with minimal reticulation number that displays a set of networks.
We showed that the {\sc Tree-child Network Hybridization} problem, in which we restrict our inputs and output to be tree-child networks, can be solved by making slight adjustments to the FPT algorithm presented in \cite{van2019practical}.

In practice, our algorithm can be sped up using the heuristic improvement that was introduced in \cite{van2019practical}.
We may consider branch reduction, in which we ignore parts of the search tree where no better solution can be found.

For this problem, FPT is essentially the best we can do, because solving the {\sc Network Hybridization} problem for an input set of tree-child networks is NP-hard. This follows from the fact that it is already NP-hard for an input set of trees. 
It has recently been shown that if all level-$(k-1)$ subnetworks of a level-$k$ tree-child networks are given, this network can be constructed in polynomial time \cite{murakami2019reconstructing}. In other words, the {\sc Tree-child Network Hybridization} problem is easy to solve when we are given all level-$(k-1)$ subnetworks of a level-$k$ network. 
This suggests that the problem becomes easy if the difference in reticulation number between the inputs and the output network is bounded. 
We wonder if this is still true for networks that are not tree-child, and therefore it would be interesting to see whether the {\sc Hybridization} problem is FPT with this difference in reticulation number as parameter. And, if this is the case, whether the current algorithm can be proven to have this running time. 

Recall that a TCS is a sequence of ordered pairs with two conditions imposed on them: the first condition ensures that we obtain a network from the sequence upon using the {\sc ConstructNetworkFromSequence} algorithm; the second condition ensures that the network we obtain is tree-child.
Upon removing this second condition from sequences of ordered pairs, we obtain what is called a \emph{cherry-picking sequence} \cite{YukiRemie}.
We call networks that can be reduced by a cherry-picking sequence an \emph{orchard} network.
A natural extension of the results we have presented in this paper would be to consider the following problem.

\medskip
\fbox{
\parbox{0.8\linewidth}{
{\sc Orchard Network Hybridization}\\
{\bf Input:} A set of orchard networks $\mathcal{N}$ on $X$.\\
{\bf Output:} An orchard network that displays $\mathcal{N}$ with minimal reticulation number.}
}
\medskip

Ideally, we would simply relieve Algorithm {\sc TreeChildSequence} of the tree-child condition to obtain an algorithm which would also work for orchard networks.
However simply doing so could potentially result in a much higher running time, as we do not have a bound on the number of reducible pairs for orchard networks (see Figure~\ref{fig:CherryCountCPS}).
Nevertheless, an important consequence of solving {\sc Orchard Network Hybridization} is that since tree-child networks are orchard networks, we would be able to obtain a better upper bound for the hybridization number (i.e. the reticulation number of an optimal solution for {\sc Network Hybridization}).

\begin{figure}[h]
    \centering
    \includegraphics[width=.6\textwidth]{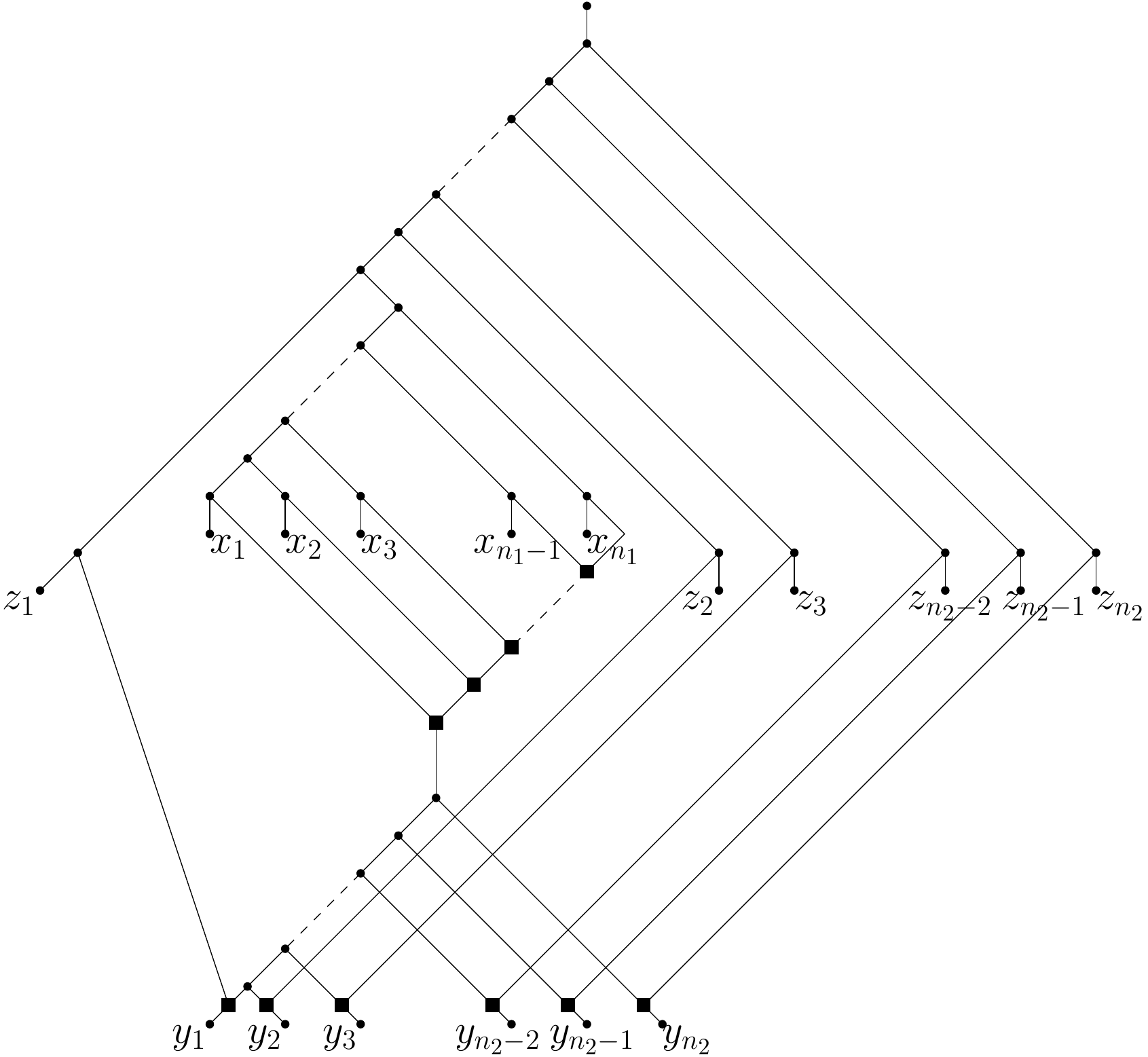}
    \caption{An orchard network with $n_1+n_2-1$ reticulations such that the set of displayed trees have at least $n_1n_2$ cherries.}
    \label{fig:CherryCountCPS}
\end{figure}

\bibliographystyle{plain}
\bibliography{bibliography}
\end{document}